\documentclass[a4paper,twocolumn,11pt,allowtoday,unpublished]{quantumarticle}
 \pdfoutput=1
\usepackage{graphicx,color}
\usepackage{amsmath,amssymb,amsfonts,amsthm}
\usepackage{multirow}
\usepackage{multicol}
\usepackage{bm}
\usepackage[pdftex,colorlinks=true, linkcolor = blue, citecolor=blue,urlcolor=blue, bookmarksnumbered=true, bookmarksopen=true]{hyperref}
\usepackage[normalem]{ulem}
\usepackage[numbers]{natbib}
\usepackage{braket}
\usepackage{mathtools}
\DeclarePairedDelimiter\ceil{\lceil}{\rceil}

\makeatletter
\def\BState{\State\hskip-\ALG@thistlm}
\makeatother

\newtheorem{theorem}{Theorem}

\newtheorem*{remark*}{Remark}

\begin{document}

\title{Nearly optimal polynomial approximations for the quantum singular value transform}

\author{E. Rule}
\orcid{0000-0003-1316-0970}
\affiliation{
  Theoretical Division, Los Alamos National Laboratory, Los Alamos, New Mexico 87545, USA}
  
\date{\today}

\begin{abstract}
We introduce polynomial approximations of the even and odd step functions on the interval $[-1,1]$ with simple Chebyshev coefficients, making their numerical implementation straightforward. We derive rigorous error bounds and demonstrate that these polynomials are \textit{nearly optimal} in the sense that their error deviates from the theoretically optimal error by a multiplicative factor that grows logarithmically with the polynomial order. From these polynomials, we derive related nearly optimal polynomial approximations that can be used to perform quantum phase estimation, linear amplitude amplification, eigenvalue thresholding, and other quantum algorithms using the quantum singular value transform.
\end{abstract}
\maketitle

\section{Introduction}
Quantum signal processing \cite{Low:2016jxt,PhysRevLett.118.010501} (QSP) and the quantum singular value transform \cite{Gilyen:2018khw} (QSVT) are powerful computational frameworks that unify many of the most celebrated quantum algorithms \cite{PRXQuantum.2.040203}, including Hamiltonian simulation, amplitude amplification, and quantum phase estimation (QPE). Beginning from a block-encoding of an arbitrary matrix $A$, QSVT allows one to apply a polynomial function $A\rightarrow p(A)$, where the notion of $p(A)$ is understood in terms of the singular values of $A$ as detailed below. Such technology enables one to approximately encode any transformation $f(A)$ that itself can be suitably approximated by a polynomial to within some target error. For example, applying QSVT with a polynomial that approximates the function $1/x$ allows one to perform matrix inversion \cite{doi:10.1137/16M1087072,Sunderhauf:2025lbc}.

Any matrix $A$ has a singular value decomposition $A=W \Sigma V^\dagger$, where $W$ and $V$ are unitary matrices and $\Sigma$ is a diagonal matrix of non-negative real numbers $\{\sigma_k\}$ known as the singular values of $A$. In the standard implementation of QSVT \cite{PRXQuantum.2.040203}, one begins from a block-encoding of a matrix $A$ into a unitary $U$ as $A=\tilde{\Pi}U\Pi$, where $\tilde{\Pi}$, $\Pi$ are operators that project onto the left and right singular vectors of $A$, respectively. By interleaving $U$ (and $U^\dagger$) with a series of projector-controlled phase-shift operators,
\begin{equation}
    \begin{split}
        U_{\vec{\phi}}=\tilde{\Pi}_{\phi_1}U\left[\prod_{n=1}^{(N-1)/2}\Pi_{\phi_{2n}}U^{\dagger}\tilde{\Pi}_{\phi_{2n+1}}U\right],
    \end{split}
    \label{eq:QSVT_operator_odd}
\end{equation}
one is able to apply, in this example, an odd (real-valued) polynomial transformation of degree $N$ to the singular values of $A$. That is,
\begin{equation}
p_N(A)\equiv Wp_N(\Sigma)V^\dagger=\tilde{\Pi}U_{\vec{\phi}}\Pi.
\end{equation}
As $\Sigma$ is a diagonal matrix, it is clear that $p_N(\Sigma)$ acts independently to transform each singular value $\{\sigma_k\}$ of $A$. An equivalent operator can be defined in the case that $p_N$ is an even polynomial. When $A$ is a Hermitian matrix, the transformation is applied to the eigenvalues of $A$; this procedure is aptly known as the quantum eigenvalue transformation (QET) \cite{Low:2016znh,Gilyen:2018khw}. In the standard formulation \cite{Gilyen:2018khw,PRXQuantum.2.040203}, the applied polynomial must satisfy three conditions:
\begin{enumerate}
    \item The polynomial must be real-valued.
    \item The polynomial must have definite parity.
    \item The polynomial must be bounded, $|p_N(x)|\leq1$ for $x\in[-1,1]$.
\end{enumerate}
The standard QSP, QET, and QSVT frameworks have each been generalized to remove the constraints of realness and definite parity \cite{Motlagh:2023oqc,Sunderhauf:2023peh}. The constraint of boundedness reflects the unitary nature of the the transformation and cannot be circumvented. In order to block encode the matrix $A$, it must be scaled so that its singular values lie within the interval $[0,1]$. Equivalently, a Hermitian matrix must be scaled so that its eigenvalues lie within $[-1,1]$. Hence the domain of interest for the polynomial $p_N(x)$ is $x\in[-1,1]$.

In general, the target function $p(x)$ is not identically a degree-$N$ polynomial but must be approximated as such. For these purposes, we require the maximum error between the target function $p(x)$ and its degree-$N$ polynomial approximation $p_N(x)$ on $[-1,1]$ to differ by no more than $\epsilon$; that is, 
\begin{equation}
    \|p(x)-p_N(x)\|\equiv\max_{x\in[-1,1]}|p(x)-p_N(x)|\leq \epsilon.
    \label{eq:max_err}
\end{equation}
Once a suitable polynomial approximation to the target function on the domain $[-1,1]$ has been obtained, efficient classical algorithms exist for the evaluation of the needed phase factors $\vec{\phi}=(\phi_1,\ldots,\vec{\phi_N})$ \cite{Chao:2020kcx,Dong:2020exv}. As evident from Eq. \eqref{eq:QSVT_operator_odd}, the query complexity of the block-encoding operator $U$ is linear in the degree $N$ of the applied polynomial. Therefore in practical calculations, one seeks to minimize the degree of the polynomial approximation while ensuring that the resulting error is below the required bound.  This is the primary motivation of the present work.

The Chebyshev polynomials (of the first kind) $T_n(x)$ form an orthogonal basis for functions on the interval $x\in[-1,1]$, allowing us to expand the target function exactly as
\begin{equation}
    p(x)=\sum_{n=0}^{\infty}a_nT_n(x).
\end{equation}
The coefficients $a_n$, known as \textit{Chebyshev projection} coefficients, are obtained by integrating the target function against the Chebyshev polynomials with a suitable weighting function,
\begin{equation}
    a_n=\frac{\delta_n}{\pi}\int_{-1}^1\frac{dx}{\sqrt{1-x^2}}p(x)T_{n}(x),
    \label{eq:cheby_coeff}
\end{equation}
where $\delta_0 = 1$ and $\delta_n=2$ for $n\geq 1$. In certain cases, the Chebyshev coefficients can be computed analytically, and when the summation is truncated at finite order,
\begin{equation}
    p_N(x)=\sum_{n=0}^Na_nT_n(x),
\end{equation} 
the residual error can be bounded by analyzing 
\begin{equation}
    \|p(x)-p_N(x)\|=\left\|\sum_{n=N+1}^{\infty}a_nT_n(x)\right\|.
\end{equation}
In general, however, it may not be possible to obtain closed-form expressions for these coefficients, and numerical evaluations of Eq. \eqref{eq:cheby_coeff} can be computationally difficult, requiring high precision at intermediate function evaluations --- due to the oscillatory nature of the Chebyshev polynomials, one often encounters very fine cancellations between large positive and negative contributions. Moreover when obtained numerically, these coefficients provide no information about the resulting error of the approximation, unless one performs an additional numerical optimization to compute $\|p(x)-p_N(x)\|$.

Although the Chebyshev projection exactly reproduces the target function in the limit $N\rightarrow \infty$, at finite $N$ there in general exist alternative choices of the coefficients $a_n$ that yield a smaller maximum error --- in the sense of Eq. \eqref{eq:max_err} --- than the Chebyshev projections. For fixed degree $N$, the choice of $a_n$ that minimizes the maximum error is known as the \textit{optimal polynomial}. 

One alternative method for determining a set of $a_n$ that approximates $p(x)$ on $[-1,1]$ is to sample the function $p(x)$ at the zeros of $T_N(x)$, which are conveniently given by $x_i=\cos[\pi(i+1/2)/N]$ for $i=0,\ldots,N-1$. From these $N$ samples, one then constructs a degree-$N$ polynomial by requiring that $p_N(x_i)=p(x_i)$ for all $0\leq i\leq N-1$. The resulting polynomial is called the \textit{Chebyshev interpolant} of degree $N$. In contrast to the Chebyshev projection, the Chebyshev interpolant is always efficient to construct and numerically stable \cite{10.1093/comjnl/15.2.156}. As before, however, knowledge of the Chebyshev interpolant itself does not reveal anything about the error of the approximation without additional numerical calculations.

 In certain instances, the optimal polynomial approximation to the relevant target function on $[-1,1]$ is known analytically \cite{Lin:2020acy}. For generic target functions, the Remez algorithm \cite{remez1934determination} can be used to iteratively construct the optimal polynomial of degree $N$ on $[-1,1]$. Unfortunately, the system of equations to be solved in this method is notoriously ill-conditioned and becomes very numerically challenging as the degree of the polynomial increases.
 
 Although Chebyshev projections and interpolants do not necessarily yield optimal polynomial approximations, in practice the resulting maximum error is very close to optimal (a notion that we will make precise in Theorem \ref{thm:lebesgue constants} below). When the target function $p(x)$ is analytic on $[-1,1]$, then both the Chebyshev projections and interpolants converge geometrically as $rM\rho^{-N}/(\rho-1)$ for some $M>0,\rho>1$ and where $r=2$ and $4$ for projections and interpolants, respectively (see Theorem 8.2 in \cite{doi:10.1137/1.9781611975949}). Here $\rho$ represents the sum of the semi-major and semi-minor axes of a particular ellipse, the so-called Bernstein ellipse, in the complex plane where the function $p(x)$ is analytically continued, and $M$ is the maximum value of $|p(z)|$ on this ellipse. In practice, larger values of $\rho$ --- corresponding to faster convergence in $N$ --- may correspond to significantly larger constant factors $M$, and there is no guarantee that the resulting error bounds are tight \cite{Tang:2023ywa}. Thus, although both approximations are provably close to optimal, it can be difficult to calculate a priori the minimal polynomial degree $N$ required to approximate $p(x)$ to within error $\epsilon$ using either Chebyshev projections or interpolants. 

\section{Nearly optimal polynomial approximations to even and odd step functions}
\begin{figure}
    \centering
    \includegraphics[scale=0.42]{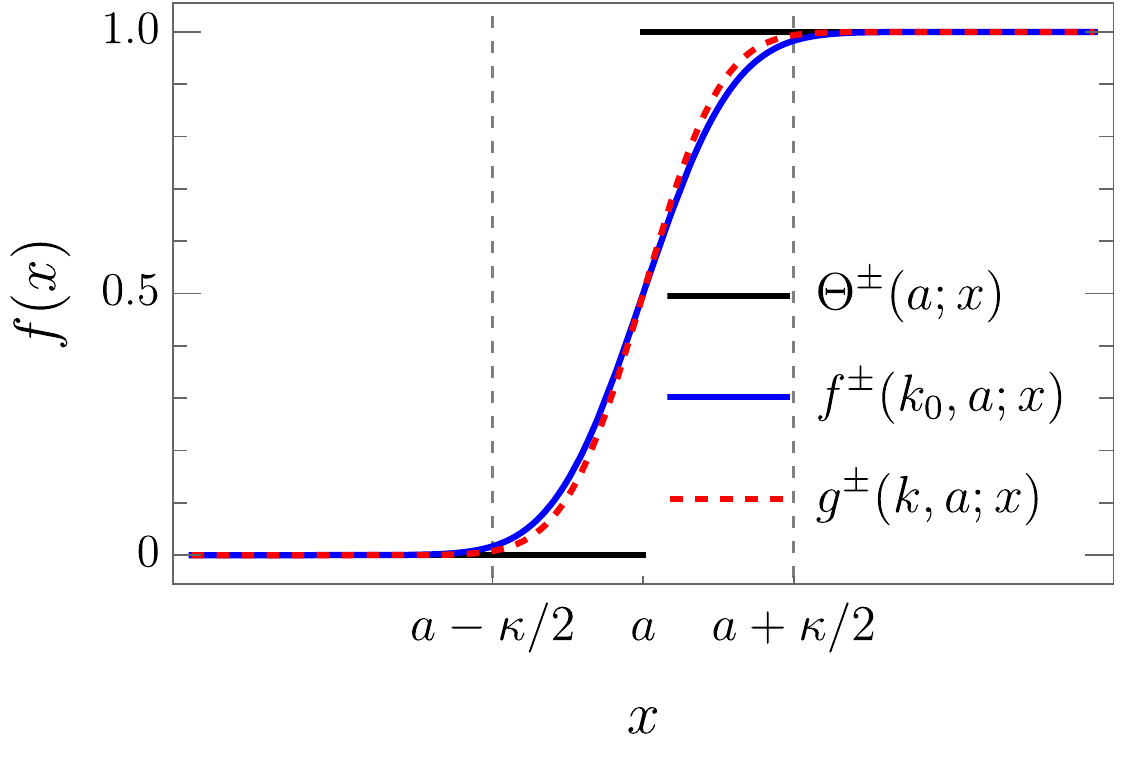}
    \caption{Sketch of the approximation of the discontinuous step functions $\Theta^\pm(a;x)$ by the smooth functions $f^\pm(k_0,a;x)$ and $g^\pm(k,a;x)$ for the same value of the scaling parameter $k_0=k$. Shaded vertical lines denote the region $[a-\kappa/2,a+\kappa/2]$ containing the discontinuity.}
    \label{fig:erf_plot}
\end{figure}

In QSP and QSVT, an important class of transformations are the even and odd step functions\footnote{We do not specify here the value of the function at the discontinuity, as it has no practical consequence for our work. But for completeness, one can adopt $\Theta^\pm(a;x)=\pm1/2$ when $x=\pm a$.} 
\begin{equation}
    \Theta^\pm(a;x)\equiv \left\{\begin{array}{cc}
       0,  &  |x|<a, \\
       1, & x>a, \\
       \pm1, & x<-a,
    \end{array}\right.
\end{equation}
where $0\leq a<1$ determines the location of the discontinuities. These functions arise, for example, when performing spectral filtering, linear amplitude amplification, and quantum phase estimation (QPE). Even in matrix-inverse problems where one has an analytic form for the optimal polynomial approximation to $1/x$ for $|x|\in[\kappa,1]$ \cite{privalov,Sunderhauf:2025lbc}, it is still necessary to multiply this polynomial by an additional window function, effectively $\Theta^+(a;x)$, in order to guarantee that the polynomial is bounded by 1 for $|x|<\kappa$. 

In the literature \cite{low_2017,PRXQuantum.2.040203}, the step functions $\Theta^+(a;x)$ are commonly approximated using either even or odd combinations of shifted error functions,
\begin{equation}
\begin{split}
        f^+(k_0,a;x)&\equiv 1-\frac{1}{2}\{\mathrm{erf}[k_0(a+x)]\\
        &~~~~~~~~~~~~-\mathrm{erf}[k_0(x-a)]\},\\
        f^-(k_0,a;x)&\equiv \frac{1}{2}\{\mathrm{erf}[k_0(a+x)]\\
        &~~~~~~~~+\mathrm{erf}[k_0(x-a)]\},\\
\end{split}
\label{eq:fplusminus}
\end{equation}
where the scaling parameter $k_0>0$ controls the sharpness of the transition. As shown in Fig. \ref{fig:erf_plot}, the function $f^\pm(k_0,a;x)$ approximates $\Theta^\pm(a;x)$ on the region $x\in D_\kappa(a)$,
where
\begin{equation}
 D_\kappa(a)\equiv \left\{x : |x|\in[0,a-\kappa/2]\cup[a+\kappa/2,1]\right\},  
 \label{eq:D_kappa}
\end{equation}
that is, everywhere on $[-1,1]$ except a window of size $\kappa$ centered on each discontinuity of the step function. As we show in detail below, larger values of $k_0$ correspond to smaller windows $\kappa$ and overall better approximations of the step functions. 

To this end, we distinguish two sources on error when approximating $\Theta^\pm(a;x)$ with polynomial approximations to smooth functions: The first is the error that arises when approximating the discontinuous function $\Theta^\pm(a;x)$ by the smooth functions $f^\pm(k_0,a;x)$ [and later $g^\pm(k,a;x)$]. As discussed above, this error is only considered on a region $D_\kappa(a)$ that excludes a small window around each discontinuity. Throughout this work, we employ $\delta$ to represent this type of error; for example,
\begin{equation}
    \begin{split}
    \delta&=\max_{x\in D_\kappa(a)}|f^\pm(k_0,a;x)-\Theta^\pm(a;x)|\\
    &\equiv\|f^\pm(k_0,a;x)-\Theta^\pm(a;x)\|_{D_\kappa(a)}.
    \end{split}
\end{equation}
The second source of error is that arising from the polynomial approximation of the smooth function; we will consistently employ the symbol $\epsilon$ to represent this type of error, as in \begin{equation}
    \epsilon=\|f^\pm_N(k_0,a;x)-f^\pm(k_0,a;x)\|.
\end{equation} 
Away from the discontinuities, the maximum total deviation between the step function and its polynomial approximation is then bounded by
\begin{equation}
    \|f_N^\pm(k_0,a;x)-\Theta^\pm(a;x)\|_{D_\kappa(a)}\leq \delta+\epsilon.
\end{equation}

Existing polynomial approximations of the shifted error functions are derived from the Jacobi-Anger expansion of the exponential decay function \cite{low_2017},
\begin{equation}
    e^{-\beta(x+1)}=e^{-\beta}\left(I_0(\beta)+2\sum_{n=1}^{\infty}I_n(\beta)T_n(-x)\right),
\end{equation}
where $I_n(\beta)$ is a modified Bessel function of the first kind. To produce a polynomial of degree $N$, the expansion is truncated at order $N$,
\begin{equation}
    p^\mathrm{exp}_N(\beta;x)\equiv e^{-\beta}\left(I_0(\beta)+2\sum_{n=1}^{N}I_n(\beta)T_n(-x)\right),
\end{equation}
and the maximum error of this approximation is precisely
\begin{equation}
  \|e^{-\beta(x+1)}-p_N^\mathrm{exp}(\beta;x)\|=2e^{-\beta}\sum_{n=N+1}^{\infty}|I_n(\beta)|.
\end{equation}
Starting from $p^\mathrm{exp}_N(\beta;x)$, one obtains a polynomial approximation of $\mathrm{erf}(k_0x)$ by first considering $p_N^\mathrm{exp}(k_0^2/2;2x^2-1)$, which approximates the Gaussian function $e^{-k_0^2x^2}$ on $[-1,1]$. Next, using the definition of the error function
\begin{equation}
    \mathrm{erf}(k_0x)=\frac{2k_0}{\sqrt{\pi}}\int_0^{x}dt~e^{-k_0^2t^2},
\end{equation}
one integrates the polynomial approximation of $e^{-k_0^2x^2}$, term-by-term with respect to $x$ to obtain
\begin{equation}
\begin{split}
    &p^\mathrm{erf}_N(k_0;x)=\frac{2k_0e^{-k_0^2/2}}{\sqrt{\pi}}\Bigg\{I_0(k_0^2/2)x\\
    &~~~+\sum_{j=1}^{N}I_j(k_0^2/2)(-1)^j\left[\frac{T_{2j+1}(x)}{2j+1}-\frac{T_{2j-1}(x)}{2j-1}\right]\Bigg\}.
    \end{split}
    \label{eq:their_erf_poly}
\end{equation}

To extend this polynomial approximation to an approximation of the shifted error function, one cannot simply take $p_N^\mathrm{erf}(k_0;x\pm a)$ because the argument of the Chebyshev polynomials must remain in the domain $[-1,1]$. To ensure this, one considers the polynomial approximations
\begin{equation}
    \begin{split}
        f_N^+(k_0,a;x)&=1-\frac{1}{2}\left[p_N^\mathrm{erf}(k';x^+)+p_N^\mathrm{erf}(k';x^-)\right],\\
        f_N^-(k_0,a;x)&=\frac{1}{2}\left[p_N^\mathrm{erf}(k';x^+)-p_N^\mathrm{erf}(k';x^-)\right],
    \end{split}
\end{equation}
where 
\begin{equation}
\begin{split}
k'&\equiv (1+a)k_0,\\
x^\pm&\equiv (x\pm a)/(1+a).
\end{split}
\end{equation}
Thus, in order to ensure that the argument of the Chebyshev polynomials remains within the domain $[-1,1]$, we must increase the value of the scaling parameter by a factor of $1+a$.\footnote{In Ref. \cite{low_2017}, it was originally suggested that for all $0<a<1$, one can take the most conservative rescaling $x^\pm=(a\pm x)/2$ and $k'=2k_0$. The primary motivation in that work was to demonstrate asymptotic scaling, which is not affected by this choice, but in practice it is advantageous to take the limited rescaling by $1+a$ that we consider in this work.}

As we demonstrate numerically in Sec. \ref{sec:sign_func}, $f_N^-(k_0,0;x)$ is a nearly optimal polynomial approximation of $\mathrm{erf}(k_0x)$ with maximum errors that are competitive with Chebyshev interpolants of the same order. Unfortunately, the quality of the approximation degrades rapidly when the argument of the error function is shifted --- for example to $\mathrm{erf}[k_0(x\pm a)]$ as in Eq. \eqref{eq:fplusminus} --- and the polynomials $f_N^\pm(k_0,a;x)$ become far from optimal for $a>0$. In all cases, the degree $N$ of the polynomial required to achieve error bounded by $\epsilon$ scales as $N\approx  O[k_0\log^{1/2}(1/\epsilon)]$. The polynomials that we introduce in this work exhibit the same scaling with improved constant factors when $a>0$. 

In addition to the sub-optimal performance of $f_N^\pm(k,a;x)$ when $a>0$, the standard approach suffers two additional disadvantages: (1) The theoretical upper bound on the error $\epsilon$ derived in Ref. \cite{low_2017} is extremely loose and is not useful in practice for estimating the polynomial degree $N$ required to achieve target error $\epsilon$. (2) At large $k_0$, numerical evaluation of the Chebyshev coefficients in Eq. \eqref{eq:their_erf_poly} requires high precision in order to resolve large cancellations between the exponential suppression of $e^{-k_0^2/2}$ and the exponential growth of $I_j(k_0^2/2)$.

Our approach is instead based on approximating the step functions $\Theta^\pm(a;x)$ with the smooth functions
\begin{equation}
\begin{split}
        g^+(k,a;x)&\equiv 1-\frac{1}{2}\{\mathrm{erf}[k(\arcsin x+\arcsin a)] \\
        &~~~~~~~~~- \mathrm{erf}[k(\arcsin x - \arcsin a)]\},\\
        g^-(k,a;x)&\equiv \frac{1}{2}\{\mathrm{erf}[k(\arcsin x+\arcsin a)] \\
        &~~~~+ \mathrm{erf}[k(\arcsin x - \arcsin a)]\},
\end{split}
\label{eq:g_plus_minus}
\end{equation}
which correspond to the replacements $a\rightarrow \arcsin a$, $x\rightarrow \arcsin x$ in Eq. \eqref{eq:fplusminus}. Upon first inspection, the $g^\pm(k,a;x)$ functions appear to be needlessly complicated in comparison to $f^\pm(k,a;x)$, but as we demonstrate throughout this work, the former have many desirable properties not possessed by the latter. In particular:

\begin{enumerate}
    \item For $a>0$, the polynomial approximation of $g^\pm(k,a;x)$ does not require any rescaling of the domain. In fact, as $a$ increases (for fixed $\kappa>0$), $g^\pm(k,a;x)$ requires a smaller value of $k$ in order to approximate $\Theta^\pm(a;x)$ to within error $\delta$ on $D_\kappa(a)$. This property is illustrated in Fig. \ref{fig:erf_plot}, where $\mathrm{erf}[k_0(x-a)]$ and $\mathrm{erf}[k(\arcsin x-\arcsin a)]$ are plotted for the same scaling parameter $k_0=k$. The latter function shows a significantly steeper transition between 0 and 1.
    \item The Chebyshev projection coefficients of $g^\pm (k,a;x)$ can be computed in closed form.
    \item Although the exact Chebyshev coefficients are complicated, for sufficiently large $k$, $g^\pm(k,a;x)$ can be approximated to very high accuracy by related polynomials with much simpler Chebyshev coefficients. 
    \item For sufficiently large $k$, $g_N^\pm(k,a;x)$ are nearly optimal polynomial approximations of $g^\pm(k,a;x)$ for all $0\leq a<1$.
\end{enumerate}

The polynomial approximations that we consider are exceptionally simple. For odd $N\geq 1$, we define the odd polynomial of degree $N$ as
\begin{equation}
    \begin{split}
            g^-_N(k&,a;x)\equiv\frac{4}{\pi}\sqrt{1-a^2}\\
            &\times\sum_{n=0}^{(N-1)/2} \frac{U_{2n}(a)}{2n+1}e^{-(2n+1)^2/4k^2}T_{2n+1}(x),
    \end{split}
        \label{eq:gpoly_minus}
\end{equation}
where $U_n(x)$ is the Chebyshev polynomial of the second kind. For even $N\geq 0$, we define the even polynomial of degree $N$ as
\begin{equation}
    \begin{split}
            g^+_N(k&,a;x)\equiv\frac{2}{\pi}\arccos a\\&+\frac{4}{\pi}\sqrt{1-a^2}\sum_{n=1}^{N/2} \frac{U_{2n-1}(a)}{2n}e^{-n^2/k^2}T_{2n}(x).
    \end{split}
        \label{eq:gpoly_plus}
\end{equation}
The Chebyshev coefficients that define $g_N^\pm(k,a;x)$ are damped by an exponential factor for $n\gtrsim k$. If one removes this exponential damping, the resulting polynomials are precisely the Chebyshev projections of $\Theta^\pm(k;x)$. While it may seem tempting to work directly with these expressions, polynomial approximations of discontinuous functions suffer from the so-called Gibbs phenomenon \cite{doi:10.1137/S0036144596301390}, where significant errors cluster near the points of discontinuity. We choose instead to take the two-step approach of first approximating $\Theta^\pm (a;x)$ by the smooth functions $g^\pm(k,a;x)$ and then approximating these smooth functions by the polynomials in Eqs. \eqref{eq:g_plus_minus} and \eqref{eq:gpoly_plus}. The polynomials $g_N^\pm (k,a;x)$ can therefore be viewed as the Chebyshev projections of $\Theta^\pm(a;x)$ with an additional Gaussian filter function to smooth the Gibbs oscillations at large $n$. 

We now state our main results in a series of theorems: Theorem \ref{thm:g_minus_fit_proof} demonstrates the ability of the smooth function $g^-(k,a;x)$ to approximate the discontinuous function $\Theta^-(a;x)$ on $[-1,1]$ excluding a small neighborhood around each discontinuity. Theorem \ref{thm:gtilde_odd} proves that the polynomial $g_N^-(k,a;x)$ defined in Eq. \eqref{eq:gpoly_minus} converges to $g^-(k,a;x)$ as $N\rightarrow \infty$, up to terms that are exponentially suppressed at sufficiently large $k$. Theorem \ref{thm:gpoly_odd} provides a theoretical upper bound on the error resulting from truncating the polynomial approximation $g_N^-(k,a;x)$ at order $N$. Theorems \ref{thm:g_plus_fit_proof}, \ref{thm:gtilde_even}, and \ref{thm:gpoly_even} provide corresponding results for the even function $g^+(k,a;x)$ and its polynomial approximation $g_N^+(k,a;x)$. The proofs of these theorems are relegated to Appendices \ref{app:g_minus_fit_proof}--\ref{app:gpoly_even}.

\begin{theorem}
\label{thm:g_minus_fit_proof}
Let $0<\kappa<1$ and $0\leq a<1-\kappa/2$. Define 
\begin{equation}
\begin{split}
    \eta&=\arcsin(a)-\arcsin(a-\kappa/2),\\
    \nu&=\arcsin(a+\kappa/2)-\arcsin(a),
\end{split}
\label{eq:eta_nu}
\end{equation}
and let $g^-(k,a;x)$ as in Eq. \eqref{eq:g_plus_minus}. Then for all $\delta>0$, choosing 
 \begin{equation}
        k=\frac{1}{\sqrt{2}}\max\left[\frac{1}{\eta}\sqrt{W\left(\frac{1}{2\pi\delta^2}\right)},\frac{1}{\nu}\sqrt{W\left(\frac{2}{\pi\delta^2}\right)}~\right],
        \label{eq:k_req_gminus}
    \end{equation}
    where $W(x)$ is the Lambert-$W$ (product-log) function guarantees that
\begin{equation}
    \|\Theta^-(a;x)-g^-(k,a;x)\|_{D_\kappa(a)}\leq \delta.
\end{equation}
\end{theorem}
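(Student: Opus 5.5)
Both $\Theta^-(a;x)$ and $g^-(k,a;x)$ are odd in $x$, so it suffices to bound the error for $x\in D_\kappa(a)\cap[0,1]$. The plan is to change variables to $y=\arcsin x$ and $b=\arcsin a$. In these variables $g^-=\tfrac12\{\mathrm{erf}[k(y+b)]+\mathrm{erf}[k(y-b)]\}$, and each piece of $D_\kappa(a)$ becomes a half-line condition on $y$. Since $\arcsin$ is increasing:
\begin{itemize}
\item $0\le x\le a-\kappa/2$ corresponds to $0\le y\le b-\eta$;
\item $a+\kappa/2\le x\le 1$ corresponds to $b+\nu\le y\le \pi/2$.
\end{itemize}
The hypothesis $a<1-\kappa/2$ guarantees that $\nu$ is well defined. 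If $a<\kappa/2$, the first region is empty and only the second constraint matters.

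\emph{Inner region} ($\Theta^-=0$). By oddness of $\mathrm{erf}$ we have $g^-=\tfrac12\{\mathrm{erf}[k(b+y)]-\mathrm{erf}[k(b-y)]\}$. This is nonnegative because $b+y\ge b-y$. Rewriting it in terms of $\mathrm{erfc}$ gives
\begin{equation*}
\tfrac12\{\mathrm{erfc}[k(b-y)]-\mathrm{erfc}[k(b+y)]\}\le\tfrac12\,\mathrm{erfc}[k(b-y)]\le \tfrac12\,\mathrm{erfc}(k\eta),
\end{equation*}
using the monotonicity of $\mathrm{erfc}$ and $b-y\ge\eta>0$.

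\emph{Outer region} ($\Theta^-=1$). Here
\begin{equation*}
1-g^-=\tfrac12\{\mathrm{erfc}[k(y+b)]+\mathrm{erfc}[k(y-b)]\}.
\end{equation*}
Both terms are nonnegative, and since $y+b\ge y-b\ge\nu$, each is at most $\mathrm{erfc}(k\nu)$. Hence $0\le 1-g^-\le \mathrm{erfc}(k\nu)$.

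\emph{Converting to the Lambert-$W$ condition.} Apply the standard tail bound $\mathrm{erfc}(z)\le e^{-z^2}/(z\sqrt{\pi})$ for $z>0$.
\begin{itemize}
\item For the inner region, set $u=2k^2\eta^2$. The requirement $\tfrac12\mathrm{erfc}(k\eta)\le\delta$ follows from $e^{-u/2}/\sqrt{2\pi u}\le\delta$. Squaring, this is $u e^{u}\ge 1/(2\pi\delta^2)$, i.e.\ $u\ge W(1/(2\pi\delta^2))$, since $ue^u$ is increasing for $u>0$. Equivalently, $k\ge \frac{1}{\sqrt2\,\eta}\sqrt{W(1/(2\pi\delta^2))}$.
\item For the outer region, set $u=2k^2\nu^2$. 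The requirement $\mathrm{erfc}(k\nu)\le\delta$ follows from $\sqrt{2}\,e^{-u/2}/\sqrt{\pi u}\le\delta$. This is equivalent to $ue^u\ge 2/(\pi\delta^2)$, i.e.\ $k\ge\frac{1}{\sqrt2\,\nu}\sqrt{W(2/(\pi\delta^2))}$.
\end{itemize}
Taking the maximum of the two thresholds gives exactly Eq.~\eqref{eq:k_req_gminus}. Monotonicity of both bounds in $k$ then ensures the error is at most $\delta$ on all of $D_\kappa(a)$.

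The only mildly delicate step is the sign bookkeeping in the two regions. In particular, one must confirm that the discarded $\mathrm{erfc}$ terms have the right sign, so that the error really is bounded by a single $\mathrm{erfc}$. This is what produces the asymmetric factors $\tfrac12$ and $1$, and hence the different arguments $1/(2\pi\delta^2)$ and $2/(\pi\delta^2)$ inside $W$. The remaining work is the algebraic inversion to Lambert-$W$ form, which is routine.
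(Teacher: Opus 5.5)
Your proposal is correct and follows essentially the same route as the paper's proof: reduce to $x\in[0,1]$ by oddness, then bound the inner region by $\tfrac12\,\mathrm{erfc}(k\eta)$ and the outer region by $\mathrm{erfc}(k\nu)$ via the tail estimate $\mathrm{erfc}(z)\le e^{-z^2}/(\sqrt{\pi}\,z)$, and finally invert to Lambert-$W$ form. Your explicit sign bookkeeping in the variables $y=\arcsin x$, $b=\arcsin a$, and your handling of the empty inner region when $a<\kappa/2$, are a slightly tidier write-up of the same integral estimates the paper uses.
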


\begin{theorem}
\label{thm:gtilde_odd}
    Let $0\leq a<1$ and let $ \tilde{g}^-(k,a;x)=\lim_{N\rightarrow \infty}g^-_N(k,a;x)$. Then for all $\tilde{\epsilon}>0$, choosing 
\begin{equation}
    k=\frac{1}{\sqrt{2}\arccos a}\sqrt{W\left(\frac{9}{2\pi\tilde{\epsilon}^2}\right)}
\end{equation}
guarantees that 
\begin{equation}
   \|\tilde{g}^-(k,a;x)-g^-(k,a;x)\|\leq \tilde{\epsilon}.
\end{equation}
\end{theorem}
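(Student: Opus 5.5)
The plan is to pass to the angle variable $\theta=\arcsin x\in[-\pi/2,\pi/2]$ and set $\alpha=\arcsin a$. There $g^-(k,a;\sin\theta)=\tfrac12[\mathrm{erf}(k(\theta+\alpha))+\mathrm{erf}(k(\theta-\alpha))]$. Since $T_{2n+1}(\sin\theta)=(-1)^n\sin((2n+1)\theta)$, the limit $\tilde g^-(k,a;\sin\theta)$ is a $2\pi$-periodic Fourier sine series in $\theta$.

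\textbf{Step 1: identify the unfiltered series.} Without the factor $e^{-(2n+1)^2/4k^2}$, the coefficients are the Chebyshev projection of $\Theta^-(a;x)$. In $\theta$ this is the Fourier series of the periodic odd square wave $S(\theta)$, which equals $+1$ on $(\alpha,\pi-\alpha)$, $-1$ on $(-\pi+\alpha,-\alpha)$, and $0$ elsewhere. This is consistent with $x=\sin\theta$ being symmetric under $\theta\to\pi-\theta$. I would check this directly by computing the sine coefficients of $S$. They are $\tfrac{4}{\pi m}\cos(m\alpha)$ for odd $m$, and with $\cos((2n+1)\alpha)=(-1)^n\sqrt{1-a^2}\,U_{2n}(a)$ they reproduce Eq.~\eqref{eq:gpoly_minus}.

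\textbf{Step 2: the Gaussian factor is a heat-kernel convolution.} Multiplying the $m$-th Fourier mode by $e^{-m^2/4k^2}$ is convolution with the periodized Gaussian $\sum_{j\in\mathbb Z}\frac{k}{\sqrt\pi}e^{-k^2(t+2\pi j)^2}$. Convolving $S$ with it gives an explicit sum of error functions. There is one pair of erf terms for each jump of $S$ (at $\pm\alpha$ and $\pm(\pi-\alpha)$), repeated over all $2\pi j$ translates. Hence $\tilde g^-$ equals $g^-$ plus the contributions of the remaining jumps, namely those at $\pm(\pi-\alpha)$ and all translates with $j\neq0$. Each such contribution can be written as $\pm\tfrac12\mathrm{erfc}(k d)$, where $d$ is the distance from $\theta$ to that jump. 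Pairing the terms appropriately is what keeps the sum absolutely convergent, so this grouping should be done carefully.

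\textbf{Step 3: bound the tails.} For $|\theta|\le\pi/2$, the nearest extra jumps $\pm(\pi-\alpha)$ are at distance at least $\pi/2-\alpha=\arccos a$. All other jumps are farther, by amounts growing linearly in $|j|$, and the worst case is $x=\pm1$. The main obstacle is bounding this infinite tail sum by $\tfrac32\,\mathrm{erfc}(k\arccos a)$. I would try first to dominate the $j\neq0$ terms by a geometric series in $e^{-k^2(\cdot)}$, or alternatively to argue via alternating signs and monotonicity of $\mathrm{erfc}$. Either route should leave a constant $c=3/2$ multiplying the leading term, with a little slack to spare.

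\textbf{Step 4: solve for $k$.} Use $\mathrm{erfc}(z)\le e^{-z^2}/(z\sqrt\pi)$ with $z=k\arccos a$ and $c=3/2$. Requiring $c\,e^{-z^2}/(z\sqrt\pi)\le\tilde\epsilon$ and squaring gives $2z^2e^{2z^2}\ge 2c^2/(\pi\tilde\epsilon^2)=9/(2\pi\tilde\epsilon^2)$. This is satisfied when $2z^2=W\!\left(9/(2\pi\tilde\epsilon^2)\right)$, which is exactly the stated choice of $k$. Finally, the sup over $\theta\in[-\pi/2,\pi/2]$ is the sup over $x\in[-1,1]$, which gives $\|\tilde g^- - g^-\|\le\tilde\epsilon$.
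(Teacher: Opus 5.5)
Your Steps 1, 2 and 4 are correct. They are the paper's argument written in the variable $\arcsin x$ instead of $\arccos x$. The periodized Gaussian of Step 2 is exactly the modular transformation of $\theta_2$ used in the paper. Your residual $\pm\frac12\mathrm{erfc}(kd)$ terms are its $R_1,R_2,R_3$ unpacked, and the Mills-ratio/Lambert-$W$ computation is identical.

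The gap is Step 3. It carries all of the content, you only conjecture it, and neither suggested route works as described.

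First, the jumps beyond $\pm(\pi-\alpha)$ are not negligible next to the leading one. At $\theta=\pi/2$ the translate $\pi+\alpha$ lies at distance $\pi-\arccos a$, which equals $\arccos a$ when $a=0$. Its contribution is then as large as the leading term, so there is no ``leading term plus geometric tail'' structure uniformly in $a$.

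Second, no termwise absolute bound gives a $k$-independent constant, because for small $k\arccos a$ many erfc tails are of order one. For $a=0$, $\theta=\pi/2$ and $z=k\pi/2$, the sum of absolute values is $\mathrm{erfc}(z)+2\sum_{m\ge 1}\mathrm{erfc}[(2m+1)z]$. This already exceeds $\frac32\mathrm{erfc}(z)$ for $z\lesssim 0.34$. A geometric domination could therefore at best work for $k$ bounded away from zero, and small $k$ would need separate treatment.

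Note also that once the residuals are written as $\pm\frac12\mathrm{erfc}(kd)$, the series converges absolutely with no grouping at all. Grouping matters for getting the bound, not for convergence.

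The missing idea is the right grouping. All jumps of $S$ sit at $\pm\alpha+n\pi$, $n\in\mathbb{Z}$, with size $(-1)^n$. Pair the jump at $\pm\alpha+n\pi$ with the one at $\pm\alpha-n\pi$ ($n\ge 1$). The combined contribution is
\[
\frac{(-1)^n}{2}\left\{\mathrm{erf}[k(n\pi+d)]-\mathrm{erf}[k(n\pi-d)]\right\},\qquad d=\theta\mp\alpha,\quad |d|\le \pi-\arccos a<\pi .
\]
Its magnitude is the Gaussian mass of a window of fixed width $2k|d|$ sliding to the right, so it decreases in $n$ for \emph{every} $k>0$. Each of the two families is therefore an alternating series bounded by its $n=1$ term. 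That term is at most $\frac12\mathrm{erfc}[k(\pi-|d|)]\le\frac12\mathrm{erfc}(k\arccos a)$.

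This yields $\|\tilde g^{-}-g^{-}\|\le\mathrm{erfc}(k\arccos a)$ for all $k>0$. That is stronger than the constant $3/2$ you need, and your Step 4 then completes the proof.

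This is the mechanism behind the paper's bounds on $R_2$ and $R_3$; its $R_2$ is exactly your $+$ family. In your variables, the paper centers the other family at $\pi-\alpha$ rather than at $-\alpha$. That splits off the extra term $R_1$, which is where its constant $3/2$ comes from.
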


\begin{theorem}
\label{thm:gpoly_odd}
 Let $0\leq a<1$. Let $g^-_N(k,a;x)$ as in Eq. \eqref{eq:g_plus_minus} and let $\tilde{g}^-(k;a;x)=\lim_{N\rightarrow \infty}g_N^-(k,a;x)$. Then for all $\epsilon>0$, choosing 
\begin{equation}
    N=2\ceil*{k\sqrt{W(1/(\pi\epsilon))}}+1,
\end{equation}
guarantees that
\begin{equation}
     \|\tilde{g}^-(k,a;x)-g^-_N(k,a;x)\|\leq\epsilon.
\end{equation}

\end{theorem}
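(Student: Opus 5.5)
The plan is to write the truncation error as a tail of the Chebyshev series of $\tilde{g}^-$ and bound it termwise. By definition,
\begin{equation*}
\tilde{g}^-(k,a;x)-g^-_N(k,a;x)=\frac{4}{\pi}\sqrt{1-a^2}\sum_{n=(N+1)/2}^{\infty}\frac{U_{2n}(a)}{2n+1}e^{-(2n+1)^2/4k^2}T_{2n+1}(x).
\end{equation*}
We have $|T_{2n+1}(x)|\le 1$ on $[-1,1]$. Writing $a=\cos\theta$ gives $\sqrt{1-a^2}\,U_{2n}(a)=\sin((2n+1)\theta)$, so $\sqrt{1-a^2}\,|U_{2n}(a)|\le 1$ uniformly in $a$. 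This removes all dependence on $a$ and leaves
\begin{equation*}
\|\tilde{g}^- -g^-_N\|\le\frac{4}{\pi}\sum_{n=(N+1)/2}^{\infty}\frac{e^{-(2n+1)^2/4k^2}}{2n+1}.
\end{equation*}

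Next I bound this sum by an integral. Set $m=2n+1$, which runs over odd integers $m\ge N+2$. The summand $h(m)=e^{-m^2/4k^2}/m$ is decreasing, and the odd integers are spaced by $2$. Hence
\begin{equation*}
\sum_{m\ge N+2,\ m\ \mathrm{odd}}h(m)\le\frac12\int_{N}^{\infty}\frac{e^{-m^2/4k^2}}{m}\,dm.
\end{equation*}
I bound $1/m$ by $m/N^2$ for $m\ge N$, so the integral is at most $(1/N^2)\cdot 2k^2e^{-N^2/4k^2}$. This gives a total bound of
\begin{equation*}
\frac{4}{\pi}\,\frac{k^2}{N^2}\,e^{-N^2/4k^2}.
\end{equation*}
An alternative is the exact form $\tfrac12 E_1(N^2/4k^2)$ together with $E_1(z)\le e^{-z}/z$, which gives the same result.

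Now insert the choice of $N$. Let $w=W(1/(\pi\epsilon))$, so that $we^{w}=1/(\pi\epsilon)$. Since $N=2\lceil k\sqrt{w}\rceil+1\ge 2k\sqrt{w}$, we get $z\equiv N^2/4k^2\ge w$. The bound is $\frac{1}{\pi z}e^{-z}$, which is decreasing in $z$, so it is at most $\frac{1}{\pi w e^{w}}=\epsilon$. The constants line up exactly with the stated formula, which suggests this is the intended route.

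The main obstacle is getting the integral comparison and the constant right. The step "odd spacing gives a factor $\tfrac12$" needs the first term at $m=N+2$ to be dominated by $\tfrac12\int_N^{N+2}h$, which holds because $h$ is decreasing. If that integral bound gave a factor off by $2$, I would fall back on the cruder inequality $(N+1)^2\ge N^2$ to absorb the slack. I would also check the edge case of small $w$: there the bound $e^{-z}/(\pi z)$ is still valid, since the estimate $E_1(z)\le e^{-z}/z$ holds for all $z>0$.
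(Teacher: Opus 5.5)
Your proof is correct and follows the paper's argument. It uses the same termwise bound from $|\sqrt{1-a^2}U_{2n}(a)|\le 1$ and $|T_{2n+1}(x)|\le 1$, the same integral comparison (the paper's $\int_{(N-1)/2}^{\infty}dt\,e^{-(2t+1)^2/4k^2}/(2t+1)$ is exactly your $\tfrac12\int_N^\infty h(m)\,dm$), and reaches the same bound $\frac{4}{\pi}\frac{k^2}{N^2}e^{-N^2/4k^2}$. The paper gets there via $E_1(z)\le e^{-z}/z$, which is equivalent to your $1/m\le m/N^2$, and you additionally spell out the Lambert-$W$ step ($N^2/4k^2\ge w$ plus monotonicity of $e^{-z}/(\pi z)$) that the paper leaves implicit.
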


\begin{theorem}
\label{thm:g_plus_fit_proof}
Let $\kappa>0$ and $0\leq a<1-\kappa/2$. Let $\eta,\nu$ as in Eq. \eqref{eq:eta_nu}, and let $g^+(k,a;x)$ as in Eq. \eqref{eq:g_plus_minus}. Then for all $\delta>0$, choosing 
 \begin{equation}
        k=\frac{1}{\sqrt{2}}\max\left[\frac{1}{\eta}\sqrt{W\left(\frac{2}{\pi\delta^2}\right)},\frac{1}{\nu}\sqrt{W\left(\frac{1}{2\pi\delta^2}\right)}~\right],
        \label{eq:k_req_gplus}
    \end{equation}
  guarantees that
\begin{equation}
    \|\Theta^+(a;x)-g^+(k,a;x)\|_{D_\kappa(a)}\leq \delta.
\end{equation}
\end{theorem}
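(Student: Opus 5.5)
Since $g^+$ and $\Theta^+$ are both even in $x$, it suffices to bound the error for $x\in[0,1]$ with $x\in D_\kappa(a)$. On that half-line, $D_\kappa(a)$ splits into the inner region $x\in[0,a-\kappa/2]$ (present only when $a\geq\kappa/2$), where $\Theta^+=0$, and the outer region $x\in[a+\kappa/2,1]$, where $\Theta^+=1$. I will rewrite the error in terms of complementary error functions. Set $s=\arcsin x$ and $b=\arcsin a$, so that $s\in[0,\pi/2]$.

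On the inner region,
\begin{equation}
g^+ = 1-\tfrac12\{\mathrm{erf}[k(s+b)]+\mathrm{erf}[k(b-s)]\} = \tfrac12\{\mathrm{erfc}[k(b+s)]+\mathrm{erfc}[k(b-s)]\},
\end{equation}
which is positive. On the outer region,
\begin{equation}
g^+-1=\tfrac12\{\mathrm{erfc}[k(s+b)]-\mathrm{erfc}[k(s-b)]\}.
\end{equation}
Here $0\le \mathrm{erfc}[k(s+b)]\le\mathrm{erfc}[k(s-b)]$ because $s\ge b$. So $|g^+-1|\le\tfrac12\,\mathrm{erfc}[k(s-b)]\le \tfrac12\,\mathrm{erfc}(k\nu)$, using monotonicity of $\mathrm{erfc}$ and $s-b\ge \arcsin(a+\kappa/2)-b=\nu$.

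On the inner region, both arguments are at least $k(b-s)\ge k\eta$, since $b-s\ge b-\arcsin(a-\kappa/2)=\eta$. Hence $|g^+|\le \mathrm{erfc}(k\eta)$. The inner region thus carries the larger prefactor $1$ and the outer region the prefactor $1/2$, which is the reverse of the odd case. This matches the swap of the Lambert arguments between Eq.~\eqref{eq:k_req_gplus} and Eq.~\eqref{eq:k_req_gminus}.

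Next I use the standard bound $\mathrm{erfc}(y)\le e^{-y^2}/(\sqrt{\pi}\,y)$ for $y>0$. It then suffices to show
\begin{equation}
\frac{e^{-k^2\eta^2}}{\sqrt{\pi}k\eta}\le\delta \quad\text{and}\quad \frac{e^{-k^2\nu^2}}{2\sqrt{\pi}k\nu}\le\delta .
\end{equation}
For the first condition, set $u=2k^2\eta^2$; it becomes $u e^{u}\ge 2/(\pi\delta^2)$. For the second, set $u=2k^2\nu^2$; it becomes $ue^{u}\ge 1/(2\pi\delta^2)$.

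Because $ue^u$ is increasing, these hold exactly when $k\eta\ge\sqrt{W(2/(\pi\delta^2))/2}$ and $k\nu\ge\sqrt{W(1/(2\pi\delta^2))/2}$, respectively. The choice of $k$ in Eq.~\eqref{eq:k_req_gplus}, being the maximum of the two thresholds, satisfies both. Since each left-hand side decreases in $k$, each inequality also holds at that $k$.

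One edge case needs checking: when $a<\kappa/2$ the inner region is empty and $\eta$ is formally defined with $\arcsin$ of a negative number. This causes no harm, since only the outer bound is needed in that case and the max still dominates it.

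The main obstacle is the careful sign bookkeeping in the inner region, confirming that both erfc arguments are at least $k\eta$ (using $s\ge0$). I also need the uniform bound $\arcsin(x)-b\ge\nu$, which follows from monotonicity of $\arcsin$.
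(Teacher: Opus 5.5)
Your proposal is correct and follows essentially the same route as the paper's proof. Both split $x\ge 0$ into the inner and outer regions, rewrite $g^+$ via $\mathrm{erfc}$, and use the tail bound $\mathrm{erfc}(y)\le e^{-y^2}/(\sqrt{\pi}y)$ (the paper derives it inline from $\int_y^\infty e^{-t^2}dt\le\int_y^\infty (t/y)e^{-t^2}dt$) to get $M_1\le e^{-k^2\eta^2}/(\sqrt{\pi}k\eta)$ and $M_2\le e^{-k^2\nu^2}/(2\sqrt{\pi}k\nu)$, then invert with Lambert-$W$ in the same way. The differences are minor: you apply monotonicity of $\mathrm{erfc}$ before the tail bound instead of maximizing the final expression over $x$, and you spell out the monotonicity in $k$ and the empty-inner-region case $a<\kappa/2$, which the paper leaves implicit.
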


\begin{theorem}
\label{thm:gtilde_even}
    Let $0\leq a<1$. Let $g_N^+(k,a;x)$ as in Eq. \eqref{eq:g_plus_minus} and let $\tilde{g}^+(k,a;x)=\lim_{N\rightarrow\infty}g_N^+(k,a;x)$. Then for all $\tilde{\epsilon}>0$, choosing 
    \begin{equation}
    k\geq \frac{1}{\sqrt{2}\arccos a}\sqrt{W\left(\frac{9+4\sqrt{2}}{2\pi\tilde{\epsilon}^2}\right)}
    \end{equation}
    guarantees that 
    \begin{equation}
        \|\tilde{g}^+(k,a;x)-g^+(k,a;x)\|\leq\tilde{\epsilon}
    \end{equation}
\end{theorem}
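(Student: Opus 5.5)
The plan is to pass to the angular variable and read $\tilde g^+$ as a Gaussian-smoothed periodic square wave; the difference from $g^+$ is then the contribution of the periodic images, which decays like a Gaussian tail. Set $x=\sin\theta$ with $\theta\in[-\pi/2,\pi/2]$ and write $\alpha=\arcsin a$, so that $\pi/2-\alpha=\arccos a$. Since $T_{2n}(\sin\theta)=(-1)^n\cos(2n\theta)$, the series defining $\tilde g^+(k,a;\sin\theta)$ is a cosine series in $2n\theta$. Because $\Theta^+$ is even, $S(\theta)\equiv\Theta^+(a;\sin\theta)$ is $\pi$-periodic in $\theta$. On $[-\pi/2,\pi/2]$ it equals $0$ for $|\theta|<\alpha$ and $1$ otherwise. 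As remarked after Eq. \eqref{eq:gpoly_plus}, the undamped coefficients are exactly the Chebyshev projection of $\Theta^+$, i.e. the Fourier cosine coefficients of $S$. The constant term $\frac{2}{\pi}\arccos a$ is the mean of $S$, and the coefficients $\frac{4}{\pi}\sqrt{1-a^2}\,U_{2n-1}(a)(-1)^n/(2n)$ are those of the square wave. I will check this directly by elementary integration, using $\sqrt{1-a^2}\,U_{2n-1}(a)=\sin(2n\alpha)$.

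The key observation is that the damping factor $e^{-n^2/k^2}$ is the Fourier multiplier at frequency $2n$ of convolution with the normalized Gaussian $G_k(t)=\frac{k}{\sqrt\pi}e^{-k^2t^2}$ on $\mathbb{R}$, since $\hat G_k(2n)=e^{-(2n)^2/4k^2}$. The Gaussian damping makes the series absolutely convergent, which justifies the term-by-term identification and the interchange of sum and integral. Hence
\begin{equation*}
\tilde g^+(k,a;\sin\theta)=\int_{\mathbb{R}}S(\theta-t)\,G_k(t)\,dt .
\end{equation*}
Write $S=1-\sum_{m\in\mathbb{Z}}\chi_{[m\pi-\alpha,\,m\pi+\alpha]}$. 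Convolving each indicator with $G_k$ gives $\frac12\{\mathrm{erf}[k(\theta-m\pi+\alpha)]-\mathrm{erf}[k(\theta-m\pi-\alpha)]\}$. The $m=0$ term reproduces exactly $g^+(k,a;\sin\theta)$ from Eq. \eqref{eq:g_plus_minus}. Therefore
\begin{equation*}
\tilde g^+-g^+=-\frac12\sum_{m\neq0}\Big\{\mathrm{erf}[k(\theta-m\pi+\alpha)]-\mathrm{erf}[k(\theta-m\pi-\alpha)]\Big\}.
\end{equation*}

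Next I bound these image terms uniformly for $|\theta|\le\pi/2$. The $m$-th interval lies at distance at least $d_m=(|m|-\tfrac12)\pi-\alpha=\arccos a+(|m|-1)\pi$ from $\theta$. Each term is therefore at most $\frac12\,\mathrm{erfc}(k d_m)$, or more precisely half the difference of two erfc's. Using $\mathrm{erfc}(z)\le e^{-z^2}/(z\sqrt\pi)$, the two nearest images ($m=\pm1$) dominate. The images with $|m|\ge2$ sit at distance at least $\arccos a+\pi$ and form a rapidly convergent geometric-type tail. Collecting terms gives a bound of the form $C\,e^{-k^2\arccos^2 a}/(k\arccos a\sqrt\pi)$, with an explicit constant $C$. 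Requiring this to be at most $\tilde\epsilon$ and setting $u=2k^2\arccos^2a$ yields an inequality $u\,e^{u}\ge c/(\pi\tilde\epsilon^2)$. That inequality is solved by $u\ge W(c/(\pi\tilde\epsilon^2))$, which matches the stated form with $c=(9+4\sqrt2)/2$. Monotonicity of $W$ then gives the result for all larger $k$, hence the ``$\ge$'' in the statement.

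The main obstacle is the bookkeeping that produces precisely the constant $9+4\sqrt2$. This requires a careful worst-case analysis over $\theta\in[-\pi/2,\pi/2]$ and over both edges $m\pi\pm\alpha$ of each image interval. It also requires showing that the whole $|m|\ge2$ tail is absorbed into the stated constant for every $a\in[0,1)$, including $a$ near $1$, where $\arccos a\to0$. I would first compute the sum of the two $m=\pm1$ contributions at the worst point $\theta=\pm\pi/2$. I would then bound the remaining tail crudely by a geometric series in $e^{-k^2\pi^2}$ relative to the leading term, and finally tune the estimate to match the given constant. If that constant proves unattainable, I expect the argument to go through with a slightly larger constant inside the Lambert-$W$ function.
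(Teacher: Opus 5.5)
Your proposal is correct, and its skeleton matches the paper's proof. The paper sets $x=\cos\xi$, writes $\tilde g^+$ as integrals of $\theta_3(z,e^{-1/k^2})$, and applies the modular transformation. That is exactly the Poisson-summation statement that the Gaussian-damped series equals the periodic square wave convolved with $G_k$. Integrating term by term then gives the same erf image sums as your $m\neq0$ terms.

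Where you differ is the estimate. The paper regroups the images into $R_1,\dots,R_4$ and bounds each separately in absolute value. It controls the two tails by an $E_1$ integral comparison and simply assumes $k>1/\pi$. Your convolution picture shows something the paper does not use: each image term is the nonnegative Gaussian mass of one of a family of disjoint intervals. With this, the bookkeeping you call the main obstacle disappears. Take $|\theta|\le\pi/2$ and your $\alpha=\arcsin a$. All images with $m\ge1$ lie in $[\pi-\alpha,\infty)$, and all with $m\le-1$ lie in $(-\infty,\alpha-\pi]$, so
\[
0\le g^+-\tilde g^+\le \tfrac12\,\mathrm{erfc}[k(\pi-\alpha-\theta)]+\tfrac12\,\mathrm{erfc}[k(\pi-\alpha+\theta)]\le \mathrm{erfc}(k\arccos a)\le \frac{e^{-k^2\arccos^2 a}}{\sqrt{\pi}\,k\arccos a}
\]
for every $k>0$, with no tail to sum. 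This is your bound with $C=1$. It therefore suffices that $u=2k^2\arccos^2 a$ satisfy $ue^u\ge 2/(\pi\tilde\epsilon^2)$. Since $2<\tfrac{9+4\sqrt2}{2}=2(\tfrac12+\sqrt2)^2$ and $ue^u$ is increasing, the stated hypothesis implies this.

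There is nothing to tune. The constant $9+4\sqrt2$ only has to be an upper bound, and your route gives a smaller one. In fact the paper's displayed bounds on $R_1,\dots,R_4$ have prefactors $1,\tfrac12,\tfrac14,\tfrac14$, which sum to $2$, not $\tfrac12+\sqrt2$. The theorem is still true because those bounds are loose; for example, $|R_1|$ is really about $\tfrac12\,\mathrm{erfc}(k\arccos a)$.

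If you instead keep your per-image bound $\tfrac12\mathrm{erfc}(kd_m)$ and sum the $|m|\ge2$ tail relative to the leading term, the ratio is at most $\sum_{j\ge1}e^{-k^2\pi^2 j^2}/(2j+1)$. This is uniformly small only for $k$ bounded below, but that is harmless. For $\tilde\epsilon\ge1$ the claim is trivial, since the difference lies in $[0,1]$. For $\tilde\epsilon<1$ the hypothesis forces $k>1/\pi$, which makes the ratio below $0.13$.

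One small slip. With $\alpha=\arcsin a$ the identity is $\sqrt{1-a^2}\,U_{2n-1}(a)=\sin(2n\arccos a)=(-1)^{n+1}\sin(2n\alpha)$, not $\sin(2n\alpha)$. The square-wave coefficients you state are nevertheless the correct ones.
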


\begin{theorem}
\label{thm:gpoly_even}
 Let $0\leq a<1$. Let $g^+_N(k,a;x)$ as in Eq. \eqref{eq:g_plus_minus} and let $\tilde{g}^+(k;a;x)=\lim_{N\rightarrow \infty}g_N^+(k,a;x)$. Then for all $\epsilon>0$, choosing 
\begin{equation}
    N=2\ceil*{k\sqrt{W(1/(\pi\epsilon))}},
    \label{eq:N_eps_plus}
\end{equation}
guarantees that the even polynomial $g^+_N(k,a;x)$ satisfies
\begin{equation}
      \|\tilde{g}^+(k,a;x)-g^+_N(k,a;x)\|\leq\epsilon.
\end{equation}
\end{theorem}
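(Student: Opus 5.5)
The proof is a direct tail estimate on the Chebyshev series of $\tilde g^+$. Let $N=2M$ with $M=\ceil*{k\sqrt{W(1/(\pi\epsilon))}}$. Then
\begin{equation}
\tilde g^+(k,a;x)-g^+_N(k,a;x)=\frac{4}{\pi}\sqrt{1-a^2}\sum_{n=M+1}^{\infty}\frac{U_{2n-1}(a)}{2n}e^{-n^2/k^2}T_{2n}(x).
\end{equation}
Since $|T_{2n}(x)|\le 1$ on $[-1,1]$, the sup norm is bounded by the sum of the absolute values of the coefficients. Write $a=\cos\theta$, so that $\sqrt{1-a^2}\,U_{2n-1}(a)=\sin(2n\theta)$ and hence $|\sqrt{1-a^2}\,U_{2n-1}(a)|\le 1$. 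This bound is uniform in $a$, which is why the theorem holds for every $0\le a<1$ with no dependence on $a$. It leaves
\begin{equation}
\|\tilde g^+-g^+_N\|\le \frac{2}{\pi}\sum_{n=M+1}^{\infty}\frac{e^{-n^2/k^2}}{n}.
\end{equation}

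Next bound the sum by an integral. The summand $e^{-n^2/k^2}/n$ is decreasing in $n$, so the sum from $M+1$ is at most $\int_M^\infty e^{-t^2/k^2}\,t^{-1}\,dt$. Bound $1/t\le t/M^2$ for $t\ge M$ to get
\begin{equation}
\int_M^\infty \frac{e^{-t^2/k^2}}{t}\,dt\le \frac{1}{M^2}\int_M^\infty t\,e^{-t^2/k^2}\,dt=\frac{k^2}{2M^2}e^{-M^2/k^2}.
\end{equation}
Setting $y=M^2/k^2$, the total error is at most $\frac{1}{\pi}\,\frac{e^{-y}}{y}$.

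To finish, set $y_0=W(1/(\pi\epsilon))$. By the definition of $W$, $y_0e^{y_0}=1/(\pi\epsilon)$, so $\frac{1}{\pi}\,\frac{e^{-y_0}}{y_0}=\epsilon$. The choice of $M$ gives $M\ge k\sqrt{y_0}$, i.e. $y\ge y_0$. Since $e^{-y}/y$ is decreasing in $y$, the error is at most $\epsilon$.

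The crude bound $1/t\le t/M^2$ is what produces exactly the constant $1/(\pi\epsilon)$ inside $W$ in Eq. \eqref{eq:N_eps_plus}, so the argument should match the theorem's constant. The main technical points are the integral comparison, which needs monotonicity of the summand so that the sum starting at $M+1$ is dominated by the integral starting at $M$, and the identity $\sqrt{1-a^2}\,U_{2n-1}(a)=\sin(2n\arccos a)$.
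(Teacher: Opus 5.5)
Your proposal is correct and is essentially the paper's proof. You use the same uniform bound $|\sqrt{1-a^2}\,U_{2n-1}(a)|\le 1$ and the same integral comparison starting at $N/2$, and your estimate $1/t\le t/M^2$ is just the paper's step $\frac{2}{\pi}\int_{N/2}^\infty e^{-t^2/k^2}t^{-1}\,dt=\frac{1}{\pi}E_1(N^2/4k^2)\le \frac{4}{\pi}\frac{k^2}{N^2}e^{-N^2/4k^2}$ written without naming $E_1$, after which the Lambert-$W$ inversion is identical.
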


\subsection{Remarks}
\label{sec:remarks}
In Theorems \ref{thm:g_minus_fit_proof} and \ref{thm:g_plus_fit_proof}, we have given tight theoretical bounds on the error between $g^\pm(k,a;x)$ and $\Theta^\pm(a;x)$ in terms of the Lambert-$W$ function. Very fast and accurate numerical evaluations of the principal branch of the Lambert-$W$ function without the need for extended precision \cite{num_W_eval}. Therefore Eqs. \eqref{eq:k_req_gminus} and \eqref{eq:k_req_gplus} can be used in practice. Nonetheless, to understand the scaling relations between $k$, $a$, $\kappa$, and $\delta$ more intuitively we can expand for small $\kappa$, 
\begin{equation}
    \eta\approx\nu \approx \frac{\kappa}{2\sqrt{1-a^2}}.
\end{equation}
The requisite value of the scaling parameter $k$ [for both $g^-(k,a;x)$ and $g^+(k,a;x)$] is then given by 
\begin{equation}
    \begin{split}
    k&\approx \sqrt{1-a^2}\frac{\sqrt{2}}{\kappa}\sqrt{W\left(\frac{2}{\pi\delta^2}\right)}\\
    &=\sqrt{1-a^2}k_0,
    \end{split}
    \label{eq:k_approx}
\end{equation}
where 
\begin{equation}
    k_0\equiv \frac{\sqrt{2}}{\kappa}\sqrt{W\left(\frac{2}{\pi\delta^2}\right)},
    \label{eq:k0_scale}
\end{equation}
is the scaling parameter required for the function $f^\pm(k_0,a;x)$ \cite{low_2017} to approximate $\Theta^\pm(a;x)$ to within error $\delta$ on $D_{\kappa(a)}$, which is independent of $a$. 

Finally, we can utilize the bound $W(x)\leq \log(x)$ for $x>e$ to express the above bounds in simpler terms. Thus, whenever $0<\delta< \sqrt{2/\pi e}$, it suffices to choose 
\begin{equation}
    k=\sqrt{1-a^2}\frac{\sqrt{2}}{\kappa}\log^{1/2}\left(\frac{2}{\pi\delta^2}\right),
\end{equation}
though this bound is often quite loose compared to Eqs. \eqref{eq:k_req_gminus} and \eqref{eq:k_req_gplus}.

Figure \ref{fig:k_scaling} compares the value of the scaling parameter required for the smooth functions $f^\pm (k_0,a;x)$ and $g^\pm(k,a;x)$ to approximate $\Theta^\pm(a;x)$ to within maximum deviation $\delta=10^{-3}$ on the region $D_\kappa(a)$ for $\kappa=0.05$. In the figure, we have restored the exact expressions in Eqs. \eqref{eq:k_req_gminus} and \eqref{eq:k_req_gplus}, so there is a small difference in the required $k$ value for $g^- (k,a;x)$ compared to $g^+ (k,a;x)$ for $a>0$. In both cases, the required $k$ decreases steadily --- roughly like $\sqrt{1-a^2}$ --- as the shift parameter $a$ increases toward the maximum allowed value of $a=1-\kappa/2=0.975$. In contrast, the value of $k_0$ --- determined by Eq. \eqref{eq:k0_scale} --- required for $f^\pm(k_0,a;x)$ to approximate $\Theta^\pm(a;x)$ to within error $\delta$ on $D_\kappa (a)$ remains constant with $a$. However, it is the value of $k'=(1+a)k_0$ that determines the order of the polynomial approximation required for $f^\pm_N(k_0,a;x)$ to approximate $f^\pm(k_0,a;x)$ to within error $\epsilon$. 

As demonstrated for $g_N^-(k,a;x)$ in Theorem \ref{thm:gpoly_odd} and for $g^+_N(k,a;x)$ in Theorem \ref{thm:gpoly_even}, the degree $N$ of the polynomial required to achieve error $\epsilon$ grows linearly with the scaling parameter $k$. Similarly, for $f_N^\pm(k_0,a;x)$, $N$ grows linearly with $k'=(1+a)k_0$. Therefore to achieve the same total approximation error $\delta+\epsilon$ as the degree-$N$ polynomial $g_N^\pm(k,a;x)$, one expects to need a degree-$M$ polynomial $f_M^\pm(k_0,a;x)$ with $M/N\approx (1+a)/\sqrt{1-a^2}$.  While the polynomials introduced in this work do not improve upon the asymptotic scaling of existing approximations\footnote{As our polynomials are nearly optimal, no such improvement is possible.}, the constant-factor improvements can be significant, especially for implementations on near-term quantum devices. When $a=1/\sqrt{2}\approx 0.71$ --- the value required to implement QPE with QSVT --- the above analysis implies that $f_M^+(k_0,a;x)$ requires a polynomial of degree $M\approx 2.4 \times N$ to achieve the same total error as $g_N^+(k,a;x)$. Thus, for a fixed target error, the use of our polynomials reduces the required number of applications of the block-encoding matrix by more than a factor of two in this case. This example is explored further in Sec. \ref{sec:QPE} below.

\begin{figure}
    \centering
    \includegraphics[scale=0.54]{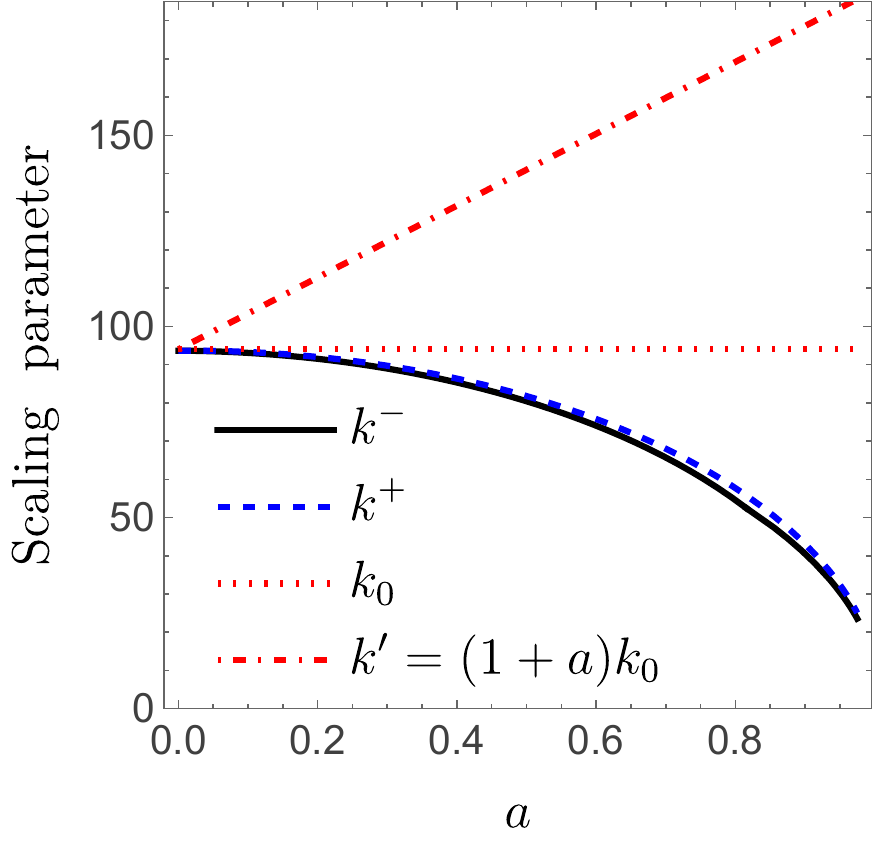}
    \caption{Value of the scaling parameter required to approximate $\Theta^\pm (a;x)$ to within error $\delta=10^{-3}$ on the region $D_{\kappa}(a)$ for $\kappa=0.05$. The solid black (dashed blue) curve shows the value of $k$ required for the function $g^-(k,a;x)$ [$g^+(k,a;x)$] to approximate $\Theta^-(a;x)$ [$\Theta^+(a;x)]$ in accordance with Theorem \ref{thm:g_minus_fit_proof} (Theorem \ref{thm:g_plus_fit_proof}). The red dotted curve shows the value of $k_0$ required for $f^\pm(k,a;x)$ to approximate $\Theta^\pm(a;x)$ to within the desired tolerance, while the dash-dotted curve shows the corresponding value $k'=(1+a)k_0$ that enters the shifted polynomial approximation in Eq. \eqref{eq:their_erf_poly}.} 
    \label{fig:k_scaling}
\end{figure}

In addition to the error $\delta$, arising from approximating $\Theta^\pm(a;x)$ by smooth functions, and the error $\epsilon$, arising from approximating the smooth functions by polynomials, there is a third type of error, $\tilde{\epsilon}$, that is relevant for the polynomials $g^\pm(k,a;x)$. In Theorems \ref{thm:gtilde_odd} and \ref{thm:gtilde_even}, $\tilde{\epsilon}$ measures the maximum deviation between the smooth functions $g^\pm(k,a;x)$ and the infinite series $\lim_{N\rightarrow \infty}g^\pm_N(k,a;x)$. The proofs of these theorems (see Appendices \ref{app:gtilde_odd} and \ref{app:gtilde_even}) reveal that $\tilde{\epsilon}$ decreases exponentially with increasing $k$, just as does the error $\delta$, albeit with different prefactors. As such, Theorems \ref{thm:g_minus_fit_proof} and \ref{thm:gtilde_odd} (equivalently, Theorems \ref{thm:g_plus_fit_proof} and \ref{thm:gtilde_even}) both suggest a minimal value of $k$ required to achieve respective errors $\delta$ and $\tilde{\epsilon}$. If $\delta= \tilde{\epsilon}$, then the value of $k$ required to achieve error $\delta$ as dictated by Theorem \ref{thm:g_minus_fit_proof} (Theorem \ref{thm:g_plus_fit_proof}) is always greater than or equal to the value required to achieve error $\tilde{\epsilon}$ in Theorem \ref{thm:gtilde_odd} (Theorem \ref{thm:gtilde_even}). In practice, if one adopts the value of $k$ required to achieve error $\delta$, then the corresponding error $\tilde{\epsilon}$ is often completely negligible.

Independent of the shift parameter $a$, the error $\epsilon$ of the polynomial approximation $g_N^\pm(k,a;x)$ is bounded above by
\begin{equation}
    \epsilon\leq \frac{4}{\pi}\frac{k^2}{N^2}e^{-N^2/4k^2}.
    \label{eq:poly_error}
\end{equation}
As demonstrated in Appendices \ref{app:gpoly_odd} and \ref{app:gpoly_even}, this result is obtained by assuming the maximal error for terms depending on the value of the shift parameter, $|\sqrt{1-a^2}U_{2n}(a)|\leq 1$ for $0\leq a<1$. This bound is saturated when $a=0$, and therefore we expect the bound in Eq. \eqref{eq:poly_error} to be relatively tight for this value of the shift parameter. Likewise, the bound in Eq. \eqref{eq:poly_error} may weaken significantly for $a>0$. These expectations are consistent with what we observe in the numerical examples in Sec. \ref{sec:applications} below, though even when the error bound weakens, it is still quite tight.

\subsection{The notion of ``nearly optimal''}
The following theorem formalizes the notion that polynomial approximations based on Chebyshev projections and interpolants are \textit{nearly optimal}:
\begin{theorem}
    \label{thm:lebesgue constants}
    Let $p(x)$ be a continuous function on $[-1,1]$. Let $p_N(x)$ be the order-$N$ Chebyshev interpolant to $p(x)$, let $q_N(x)$ be the order-$N$ Chebyshev projection of $p(x)$, and let $p_N^*(x)$
 be the optimal order-$N$ polynomial approximation to $p(x)$. Then 
 \begin{equation}
 \begin{split}
     \|f(x)-p_N(x)\|&\leq (\Lambda_N^{(1)}+1)\|f(x)-p^*_N(x)\|,\\
    \|f(x)-q_N(x)\|&\leq (\Lambda_N^{(2)}+1)\|f(x)-p^*_N(x)\|,
    \end{split}
 \end{equation}
 where $\Lambda^{(1)}_N$, $\Lambda^{(2)}_N$ are Lebesgue constants bounded above by
 \begin{equation}
 \begin{split}
    \Lambda^{(1)}_N&\leq \frac{2}{\pi}\log(N+1)+1,\\
    \Lambda^{(2)}_N&\leq \frac{4}{\pi^2}\log(N+1)+3.
    \end{split}
 \end{equation}
 \end{theorem}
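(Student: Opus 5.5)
The plan is to reduce the statement to the standard "Lebesgue constant" argument for linear projections, and then bound the two Lebesgue constants separately. (I read $f$ in the displayed inequalities as the target $p$.) Let $L$ be either the interpolation operator $p\mapsto p_N$ or the truncation operator $p\mapsto q_N$. Both are linear maps from $C[-1,1]$ to polynomials of degree at most $N$, and both reproduce such polynomials: $L(p^*_N)=p^*_N$. For interpolation this holds because a degree-$N$ polynomial is determined by its values at the $N+1$ nodes. (I will take the $N+1$ zeros of $T_{N+1}$ so the counts match; the paper's indexing is off by one here.) For projection it holds by orthogonality of the $T_n$.

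Write $\Lambda=\|L\|_{\infty\to\infty}$. Then
\begin{equation*}
p-Lp=(p-p^*_N)-L(p-p^*_N),
\end{equation*}
and the triangle inequality gives $\|p-Lp\|\le(1+\Lambda)\|p-p^*_N\|$. This is the whole reduction. What remains is to bound the two operator norms.

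For the interpolant, I write $p_N(x)=\sum_i p(x_i)\ell_i(x)$ with Lagrange basis $\ell_i$. Then $\Lambda^{(1)}_N=\max_x\sum_i|\ell_i(x)|$. For Chebyshev nodes I will substitute $x=\cos\theta$ and use $\ell_i(x)=T_{N+1}(x)/[T'_{N+1}(x_i)(x-x_i)]$. After this substitution the sum becomes a sum of $|\cos((N+1)\theta)|$ divided by differences of cosines, which can be compared to a sum of the form $\sum 1/|\theta-\theta_i|$. I will then invoke the classical estimate of Rivlin (see also Trefethen, \emph{Approximation Theory and Approximation Practice}, Thm.~15.2) that $\Lambda^{(1)}_N\le\frac{2}{\pi}\log(N+1)+1$, sketching the harmonic-sum bound rather than reproving the sharp constant.

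For the projection, substituting $x=\cos\theta$ turns $q_N$ into the truncated Fourier cosine series of $p(\cos\theta)$. That is convolution with the Dirichlet kernel $D_N(t)=\sin((N+\tfrac12)t)/\sin(t/2)$. Hence
\begin{equation*}
\Lambda^{(2)}_N\le\frac{1}{2\pi}\int_{-\pi}^{\pi}|D_N(t)|\,dt,
\end{equation*}
which is the classical Lebesgue constant of Fourier series. I will bound this integral in three steps: use $|\sin(t/2)|\ge t/\pi$ on $[0,\pi]$, split the integral at $t\sim 1/N$, and sum over the half-periods of $\sin((N+\tfrac12)t)$. This yields $\frac{4}{\pi^2}\log N+O(1)$, and tracking constants gives $\frac{4}{\pi^2}\log(N+1)+3$.

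The main obstacle is the explicit additive constants, especially the $+1$ in the interpolation bound. The reduction itself is routine. For the Fourier case, crude elementary estimates easily give a constant below 3, so there the sharp constant is not needed. For interpolation, I would rely on the known sharp result rather than rederive it.
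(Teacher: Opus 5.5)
Your overall structure matches what the paper relies on. The paper gives no argument of its own; it cites Theorems 15.1--15.3 of Trefethen's \emph{Approximation Theory and Approximation Practice}. Those are exactly your reduction $\|p-Lp\|\le(1+\Lambda)\|p-p^*_N\|$ for a linear projection, a cited Rivlin-type bound for Chebyshev interpolation, and a Dirichlet-kernel bound for Chebyshev truncation. Your treatment of the $f$/$p$ typo and of the node count ($N+1$ zeros of $T_{N+1}$, for which Rivlin's bound $\frac{2}{\pi}\log(N+1)+1$ holds) is fine.

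The step that fails as written is your estimate of $\Lambda^{(2)}_N$. Using $\sin(t/2)\ge t/\pi$ on $[0,\pi]$ gives
\[
\Lambda^{(2)}_N\le\frac{1}{\pi}\int_0^\pi\frac{|\sin((N+\tfrac12)t)|}{\sin(t/2)}\,dt\le\int_0^{(N+1/2)\pi}\frac{|\sin u|}{u}\,du=\frac{2}{\pi}\log N+O(1).
\]
Its leading coefficient is $2/\pi$, not $4/\pi^2$. Jordan's inequality overestimates $1/\sin(t/2)$ by a factor $\pi/2$ throughout the small-$t$ range, which on a logarithmic scale is where essentially all of the $\log N$ comes from. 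No tracking of additive constants can then give $\frac{4}{\pi^2}\log(N+1)+3$. Your upper bound is about $1.83+0.64\log N$ and overtakes the target once $N$ is a few hundred; at $N=1000$ it is about $6.2$ against $5.8$. So the claim that ``crude estimates easily give a constant below 3'' is true only for a crude estimate that keeps the exact local behaviour $2/t$.

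The fix is to use the fact that $1/\sin s-1/s$ is increasing on $(0,\pi/2]$. Hence $0\le 1/\sin(t/2)-2/t\le 1-2/\pi$ on $(0,\pi]$, and
\[
\Lambda^{(2)}_N\le\frac{2}{\pi}\int_0^{(N+1/2)\pi}\frac{|\sin u|}{u}\,du+\Bigl(1-\frac{2}{\pi}\Bigr).
\]
Now bound the first half-period by $\mathrm{Si}(\pi)<1.86$ and the $j$-th later one by $2/(j\pi)$. For $N\ge 1$ this gives $\Lambda^{(2)}_N\le\frac{4}{\pi^2}(1+\log N)+\frac{2}{\pi}\cdot 1.86+1-\frac{2}{\pi}<\frac{4}{\pi^2}\log(N+1)+2$. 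The case $N=0$ is trivial, since $\Lambda^{(2)}_0=1$.
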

For proof of these statements, see Theorems 15.1, 15.2, and 15.3 in \cite{doi:10.1137/1.9781611975949}. The growth of the Lebesgue constants that bound the difference between the error of the Chebyshev projections/interpolants and the optimal error is extremely slow: Even for $N=10^5$, $\Lambda^{(1)}_N+1\approx 9.3$, so that the optimal polynomial represents not even an order of magnitude improvement in accuracy over the simple interpolant. Thus the maximum errors of the Chebyshev interpolants and projections are guaranteed to differ from the optimal polynomial by a multiplicative factor that grows extremely slowly with the polynomial order, and in this sense we consider them to be \textit{nearly optimal}.

Theorems \ref{thm:gtilde_odd} and \ref{thm:gtilde_even} demonstrate that the polynomials $g_N^\pm(k,a;x)$ do not exactly converge to $g^\pm(k,a;x)$ as $N\rightarrow \infty$, but they do approximate these functions up to terms that are exponentially suppressed at large $k$. Thus, while $g_N^\pm(k,a;x)$ cannot correspond to the exact Chebyshev projection of $g^\pm(k,a;x)$, we expect that the two sets of Chebyshev coefficients must be very close for sufficiently large $k$. Indeed, we have the following theorems:
\begin{theorem}
\label{thm:g_odd_proj}
    Let $0\leq a<1$ and let
\begin{equation}
\tilde{a}_{2n+1}=\frac{4}{\pi}\sqrt{1-a^2}\frac{U_{2n}(a)}{2n+1}e^{-(2n+1)^2/4k^2},
\label{eq:an_odd}
\end{equation}
be the Chebyshev coefficients of $\tilde{g}^-(k,a;x)=\lim_{N\rightarrow \infty}g^-(k,a;x)$. Consider the exact Chebyshev projections of $g^-(k,a;x)$,
\begin{equation}
\begin{split}
        a_{2n+1}&=\frac{2}{\pi} \int_{-1}^1 \frac{dx}{\sqrt{1-x^2}}g^-(k,a;x)T_{2n+1}(x)\\
        &= \tilde{a}_{2n+1}+\Delta a_{2n+1}.
\end{split}
\end{equation}
Then for all $\tilde{\epsilon} >0$, choosing
\begin{equation}
    k=\frac{1}{\sqrt{2}\arccos a}\sqrt{W\left(\frac{32}{\pi^3\tilde{\epsilon}^2}\right)}
\end{equation}
guarantees that $|\Delta a_{2n+1}|\leq \tilde{\epsilon}$ for all $n\geq 0$.
\end{theorem}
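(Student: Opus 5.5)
The plan is to pass to the angular variable, where both $g^-(k,a;x)$ and $\tilde g^-(k,a;x)$ become Gaussian smoothings of the same step function, and to identify $\Delta a_{2n+1}$ with the Fourier coefficients of the difference between a periodized and a non-periodized smoothing. Substitute $x=\sin\theta$ with $\theta\in[-\pi/2,\pi/2]$ and set $\alpha=\arcsin a$. Then $T_{2n+1}(\sin\theta)=(-1)^n\sin[(2n+1)\theta]$ and $dx/\sqrt{1-x^2}=d\theta$. The projection integral becomes
\begin{equation*}
a_{2n+1}=\frac{2(-1)^n}{\pi}\int_{-\pi/2}^{\pi/2}h(\theta)\sin[(2n+1)\theta]\,d\theta,
\end{equation*}
where $h(\theta)=\tfrac12\{\mathrm{erf}[k(\theta+\alpha)]+\mathrm{erf}[k(\theta-\alpha)]\}$. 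This is exactly $\Theta^-$ in the variable $\theta$, convolved with the Gaussian $G_k(\theta)=(k/\sqrt\pi)e^{-k^2\theta^2}$.

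Next I would show that $\tilde g^-$ has the same description, except that the convolution is applied to the $2\pi$-periodic extension $S(\theta)$ of the step. Here $S$ is odd in $\theta$ and symmetric under $\theta\to\pi-\theta$, so it has jumps at $\pm\alpha$ and $\pm(\pi-\alpha)$ modulo $2\pi$. Its sine coefficients are $(4/\pi)\cos[(2n+1)\alpha]/(2n+1)$ up to the sign $(-1)^n$, and $\cos[(2n+1)\alpha]=\sqrt{1-a^2}\,U_{2n}(a)(-1)^n$. Convolving with $G_k$ on the real line multiplies each Fourier mode $m$ by $e^{-m^2/4k^2}$. This reproduces Eq.~\eqref{eq:an_odd}, so $\tilde a_{2n+1}$ is the corresponding projection of $\tilde h=S*G_k$.

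Hence $\Delta a_{2n+1}$ is $-(2/\pi)(-1)^n$ times the integral of $(\tilde h-h)\sin[(2n+1)\theta]$ over $[-\pi/2,\pi/2]$, which gives
\begin{equation*}
|\Delta a_{2n+1}|\le\frac{2}{\pi}\int_{-\pi/2}^{\pi/2}|\tilde h(\theta)-h(\theta)|\,d\theta
\end{equation*}
uniformly in $n$.

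The difference $\tilde h-h$ is the smoothing of $S-\Theta^-$, and $S-\Theta^-$ is a sum of step discontinuities located at the image jumps $\pm(\pi-\alpha)+2\pi j$ and $\pm\alpha+2\pi j$ with $j\ne0$. For $|\theta|\le\pi/2$, the nearest of these lies at distance at least $\pi/2-\alpha=\arccos a\equiv\beta$. So $\tilde h-h$ is a signed sum of terms $\pm\tfrac12\mathrm{erfc}(k\,d_j(\theta))$, where $d_j(\theta)$ is the distance to the $j$-th image jump. Using $\mathrm{erfc}(z)\le e^{-z^2}/(z\sqrt\pi)$ and the alternating arrangement of jumps, I would bound $|\tilde h-h|$ by the contribution of the nearest pair, $\mathrm{erfc}(k\beta)$ at worst. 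The farther images are dominated by a rapidly convergent Gaussian series $\sum_j e^{-k^2(\beta+\pi j)^2}$, which I would absorb into the leading term.

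Integrating over the interval of length $\pi$ should give
\begin{equation*}
|\Delta a_{2n+1}|\le\frac{4}{\pi^{3/2}k\beta}e^{-k^2\beta^2}.
\end{equation*}
Setting this equal to $\tilde\epsilon$, squaring, and writing $z=2k^2\beta^2$ gives $ze^{z}=32/(\pi^3\tilde\epsilon^2)$, which is exactly the stated choice of $k$ via the Lambert-$W$ function.

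I expect the main obstacle to be the bookkeeping of the image jumps, in particular getting the constant exactly $32/\pi^3$ rather than something slightly larger. The alternating signs of the image contributions (the jump at $\pi-\alpha$ goes downward while the one at $-\pi+\alpha$ goes upward) must be used to cancel or telescope the far images. Alternatively, one could integrate the erfc tail exactly, $\int_0^\infty\mathrm{erfc}(k(\beta+t))\,dt\le e^{-k^2\beta^2}/(2\sqrt\pi k^2\beta^2)$, to obtain a sharper prefactor that dominates the far-image sum. I would try the direct pointwise bound first, mirroring the argument presumably used for Theorem~\ref{thm:gtilde_odd}, and only refine if the constant fails.
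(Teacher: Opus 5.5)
Your setup is correct and differs from the paper's route: the substitution $x=\sin\theta$, the identification $\tilde h=S*G_k$ (which does reproduce $\tilde a_{2n+1}$), and the estimate $|\Delta a_{2n+1}|\le\frac{2}{\pi}\int_{-\pi/2}^{\pi/2}|\tilde h-h|\,d\theta$ all check out. The gap is in the step that is supposed to give the constant. Bounding $|\tilde h-h|$ by its supremum $\mathrm{erfc}(k\beta)$ and multiplying by the length $\pi$ gives only $|\Delta a_{2n+1}|\le 2\,\mathrm{erfc}(k\beta)\le\frac{2}{\sqrt{\pi}k\beta}e^{-k^2\beta^2}$. That is $\pi/2$ times the bound $\frac{4}{\pi^{3/2}k\beta}e^{-k^2\beta^2}$ you assert, and it leads to $W(8/(\pi\tilde{\epsilon}^2))$ instead of $W(32/(\pi^3\tilde{\epsilon}^2))$.

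No bookkeeping of image signs fixes this. At $a=0$ the two image jumps at $\pi\pm\alpha$ (your $\alpha=\arcsin a$) merge into one jump of height $2$, so $|\tilde h-h|(\pi/2)\ge\mathrm{erfc}(k\pi/2)-2\,\mathrm{erfc}(3k\pi/2)$. The supremum really is about $\mathrm{erfc}(k\beta)$, and the sup-times-length route loses a factor tending to $\pi/2$ as $k\to\infty$, which is exactly the small-$\tilde{\epsilon}$ regime. For comparison, the paper computes $\Delta a_{2n+1}$ in closed form. Integration by parts gives a factor $1/(2n+1)$, the truncated Gaussian--Fourier integrals become complex $\mathrm{erfc}$'s, and $|\mathrm{erfc}(x+iy)|\le e^{y^2}\mathrm{erfc}(x)$ cancels the Gaussian damping. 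This yields $|\Delta a_{2n+1}|\le\frac{2}{\pi(2n+1)}[\mathrm{erfc}(k\beta)+\mathrm{erfc}(k(\pi-\beta))]$, whose leading constant $4/\pi$ against your $2$ is exactly the missing $\pi/2$.

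To make your route work, keep the $\theta$-dependence instead of taking a supremum. Let $D=S-\Theta^-$, with $\Theta^-$ the non-periodic step in $\theta$ on the whole line. Then $D$ vanishes on $(-\pi+\alpha,\pi-\alpha)$, satisfies $|D|\le 2$, and is $\le 0$ on the right and $\ge 0$ on the left. This gives $|\tilde h-h|(\theta)\le\mathrm{erfc}(k(\pi-\alpha-|\theta|))$, hence $|\Delta a_{2n+1}|\le\frac{4}{\pi}\int_\beta^{\beta+\pi/2}\mathrm{erfc}(kt)\,dt$.

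Your tail estimate does not finish this as written. The correct bound is $\int_0^\infty\mathrm{erfc}(k(\beta+t))\,dt\le e^{-k^2\beta^2}/(2\sqrt{\pi}k^3\beta^2)$, so a factor $1/k$ was dropped. Even corrected, it reaches the target only when $k^2\beta\ge 1/2$. The theorem is quantified over all $\tilde{\epsilon}>0$, which is equivalent to the target bound for every $k>0$.

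Using also the finite length $k\pi/2$ of the range closes the argument. With $x=k\beta$ and $r=\sqrt{\pi}\,x\,e^{x^2}\mathrm{erfc}(x)\in(0,1)$,
\[
\int_x^{x+k\pi/2}\mathrm{erfc}(u)\,du\le\min\left(\int_x^\infty\mathrm{erfc}(u)\,du,\ \frac{k\pi}{2}\,\mathrm{erfc}(x)\right)=\frac{e^{-x^2}}{\sqrt{\pi}}\min\left(1-r,\ \frac{\pi r}{2\beta}\right),
\]
so
\[
|\Delta a_{2n+1}|\le\frac{4e^{-k^2\beta^2}}{\pi^{3/2}k\beta}\min\left(\beta(1-r),\ \frac{\pi r}{2}\right)\le\frac{e^{-k^2\beta^2}}{\sqrt{\pi}\,k\beta},
\]
using $\beta\le\pi/2$ and $\min(1-r,r)\le 1/2$. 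This is $\pi/4$ times the required bound, so the stated choice of $k$ follows. For $n=0$ it is even slightly sharper than the paper's constant, though it lacks the paper's $1/(2n+1)$ decay.
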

\begin{theorem}
\label{thm:g_even_proj}
    Let $0\leq a<1$ and define 
    \begin{equation}
    \begin{split}
        \tilde{a}_0&=\frac{2\arccos a}{\pi},\\
        \tilde{a}_{2n}&= \frac{4}{\pi}\sqrt{1-a^2}\frac{U_{2n-1}(a)}{2n} e^{-n^2/k^2},
    \end{split}
    \label{eq:an_even}
    \end{equation}
    be the Chebyshev coefficients of $\tilde{g}^+(k,a;x)=\lim_{N\rightarrow\infty}g_N^+(k,a;x)$. Consider the exact Chebyshev projections 
\begin{equation}
\begin{split}
        a_{2n}&=\frac{\delta_n}{\pi}\int_{-1}^1\frac{dx}{\sqrt{1-x^2}}g^+(k,a;x)T_{2n}(x)\\
        &=\tilde{a}_{2n}+\Delta a_{2n}.
\end{split}
\end{equation}
Then for all $\tilde{\epsilon}>0$, choosing 
\begin{equation}
    k=\frac{1}{\sqrt{2}\arccos a}\sqrt{W\left(\frac{8}{\pi\tilde{\epsilon}^2}\right)},
\end{equation}
guarantees that $|\Delta a_{2n}|\leq \tilde{\epsilon}$ for all $n\geq 0$.
\end{theorem}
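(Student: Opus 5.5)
The plan is to change variables so that the Chebyshev projection becomes a Fourier-cosine integral over a finite interval. On the whole real line the Fourier transform of a difference of shifted error functions is known in closed form. So $a_{2n}$ should equal $\tilde a_{2n}$ exactly, up to the contribution of the tails outside the finite interval, and those tails are Gaussian-small.

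\textbf{Step 1 (change of variables).} Set $x=\cos\phi$ and $\psi=\pi/2-\phi\in[-\pi/2,\pi/2]$, so that $\arcsin x=\psi$ and $T_{2n}(x)=\cos(2n\phi)=(-1)^n\cos(2n\psi)$. Write $\alpha=\arcsin a$ and
\begin{equation*}
h(\psi)=\tfrac12\{\mathrm{erf}[k(\psi+\alpha)]-\mathrm{erf}[k(\psi-\alpha)]\},
\end{equation*}
so that $g^+=1-h$. Then
\begin{equation*}
a_{2n}=\frac{\delta_n(-1)^n}{\pi}\int_{-\pi/2}^{\pi/2}[1-h(\psi)]\cos(2n\psi)\,d\psi .
\end{equation*}
The constant term integrates exactly: it gives $\pi$ for $n=0$ and $0$ for $n\geq1$.

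\textbf{Step 2 (extend to the real line).} Replace $\int_{-\pi/2}^{\pi/2}$ by $\int_{\mathbb{R}}$ minus the tails. The function $h$ is the convolution of the indicator of $[-\alpha,\alpha]$ with the Gaussian $(k/\sqrt\pi)e^{-k^2t^2}$. Hence
\begin{equation*}
\int_{\mathbb{R}}h(\psi)\cos(\omega\psi)\,d\psi=\frac{2\sin(\omega\alpha)}{\omega}e^{-\omega^2/4k^2}.
\end{equation*}
With $\omega=2n$ this gives $\frac{\sin(2n\alpha)}{n}e^{-n^2/k^2}$. For $\omega\to0$ it gives $2\alpha$, which reproduces $\tilde a_0=1-2\arcsin a/\pi=2\arccos a/\pi$.

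For $n\geq1$, write $a=\cos\theta$ with $\theta=\arccos a=\pi/2-\alpha$. Then $\sqrt{1-a^2}\,U_{2n-1}(a)=\sin(2n\theta)=-(-1)^n\sin(2n\alpha)$. Combined with the prefactor $-(-1)^n\delta_n/\pi$, the full-line term is exactly $\tilde a_{2n}$ in Eq.~\eqref{eq:an_even}. Therefore
\begin{equation*}
\Delta a_{2n}=\frac{\delta_n(-1)^n}{\pi}\int_{|\psi|>\pi/2}h(\psi)\cos(2n\psi)\,d\psi .
\end{equation*}

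\textbf{Step 3 (bound the tails).} Use $\delta_n\le2$, $|\cos|\le1$, the evenness of $h$, and, for $\psi>\pi/2$,
\begin{equation*}
h(\psi)=\tfrac12[\mathrm{erfc}(k(\psi-\alpha))-\mathrm{erfc}(k(\psi+\alpha))]\le\tfrac12\mathrm{erfc}(k(\psi-\alpha)).
\end{equation*}
Set $\beta=\pi/2-\alpha=\arccos a$. This yields
\begin{equation*}
|\Delta a_{2n}|\le\frac{2}{\pi k}\int_{k\beta}^\infty \mathrm{erfc}(t)\,dt .
\end{equation*}
The integral is $\mathrm{ierfc}(k\beta)=e^{-z^2}/\sqrt\pi-z\,\mathrm{erfc}(z)$ with $z=k\beta$. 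It is bounded by standard continued-fraction or Mills-ratio inequalities, e.g.\ $\mathrm{ierfc}(z)\le e^{-z^2}/(2\sqrt\pi z^2)$ for $z>0$.

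\textbf{Step 4 (invert with Lambert $W$).} This gives a bound of the form $|\Delta a_{2n}|\le C\,e^{-k^2\beta^2}/(k^3\beta^2)$. To match the stated form, I would trade the extra factor of $k$ for a factor of $\beta$. One route uses $\beta\le\pi/2$; if that is too lossy, a cruder bound such as $\mathrm{ierfc}(z)\le e^{-z^2}/(2\sqrt\pi z)$ could be used instead. Either way the aim is an inequality of the form
\begin{equation*}
|\Delta a_{2n}|^2\le \frac{8}{\pi}\,\frac{e^{-2k^2\beta^2}}{2k^2\beta^2}.
\end{equation*}
Setting $y=2k^2\beta^2$, the requirement $ye^{y}\ge 8/(\pi\tilde\epsilon^2)$ is met by $y=W(8/(\pi\tilde\epsilon^2))$. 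This is exactly the stated choice of $k$, and since the right-hand side decreases in $k$, any larger $k$ also works.

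\textbf{Main obstacle.} Steps 1--2 are exact bookkeeping and the sign conventions check out. The real difficulty is Step 4: choosing the right erfc/ierfc inequality and the right use of $\beta\le\pi/2$ so that the constant comes out as $8/\pi$, rather than merely obtaining the correct exponential rate. I would try the $1/z$ Mills-type bound first and adjust if the constant does not match.
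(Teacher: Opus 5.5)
\textbf{Steps 1--3 (correct, and a different route from the paper).} The paper evaluates the projection integral in closed form. This produces complex-argument error functions such as $\mathrm{erfc}(k\alpha+in/k)$. It bounds these with $|\mathrm{erfc}(x+iy)|\le e^{y^2}\mathrm{erfc}(x)$, so that $e^{y^2}$ cancels the Gaussian prefactor, and it treats $n=0$ separately.

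You instead observe that, in the variable $\psi=\arcsin x$, $1-g^+$ is the indicator of $[-\arcsin a,\arcsin a]$ convolved with a normalized Gaussian. Then $\tilde a_{2n}$ is \emph{exactly} the full-line Fourier coefficient, and $\Delta a_{2n}$ is literally the tail integral over $|\psi|>\pi/2$. I checked your sign bookkeeping, including $\sqrt{1-a^2}\,U_{2n-1}(a)=-(-1)^n\sin(2n\arcsin a)$ and the $n=0$ case. I also checked $|\Delta a_{2n}|\le\frac{2}{\pi k}\mathrm{ierfc}(k\beta)$. All of it is right.

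Your route uses only real $\mathrm{erfc}$ and handles every $n$ at once. It also explains where the simple coefficients come from: by Poisson summation, $\tilde g^+$ is $1$ minus the $\pi$-periodization of your $h$, which is the content of the theta-function identity in the proof of Theorem~\ref{thm:gtilde_even}. For large $k$ it is sharper, of order $e^{-k^2\beta^2}/(\pi^{3/2}k^3\beta^2)$. The paper's route buys an explicit closed form for the exact coefficients and a bound that decays like $1/n$.

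\textbf{Step 4 (a real gap as written).} Neither completion you suggest holds uniformly in $k$, and the theorem must cover small $k$. As $\tilde\epsilon$ grows, $W(8/(\pi\tilde\epsilon^2))\to0$, so the prescribed $k\to0$. The target is $\frac{2}{\sqrt\pi k\beta}e^{-k^2\beta^2}$.
\begin{itemize}
\item With $\mathrm{ierfc}(z)\le e^{-z^2}/(2\sqrt\pi z^2)$ you get $e^{-k^2\beta^2}/(\pi^{3/2}k^3\beta^2)$. This lies below the target only when $2\pi k^2\beta\ge1$. The inequality $\beta\le\pi/2$ cannot help here, because it bounds $1/\beta$ from below, not from above.
\item With the $1/z$ bound you get $e^{-k^2\beta^2}/(\pi^{3/2}k^2\beta)$. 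This lies below the target only when $k\ge 1/(2\pi)$.
\end{itemize}
So the obstacle is not reproducing the constant $8/\pi$. You never need to; any bound dominated by the target suffices. The obstacle is the behaviour as $z=k\beta\to0$.

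The clean fix is to drop the nonnegative term in $\mathrm{ierfc}(z)=e^{-z^2}/\sqrt\pi-z\,\mathrm{erfc}(z)$. For every $k>0$ this gives
\begin{equation*}
|\Delta a_{2n}|\le\frac{2}{\pi^{3/2}k}e^{-k^2\beta^2}=\frac{\beta}{\pi}\cdot\frac{2}{\sqrt\pi\,k\beta}e^{-k^2\beta^2}\le\frac12\cdot\frac{2}{\sqrt\pi\,k\beta}e^{-k^2\beta^2}.
\end{equation*}
This is where $\beta\le\pi/2$ genuinely enters; it is the same step the paper uses for $\Delta a_0$. Your Lambert-$W$ inversion then finishes the proof.

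Alternatively, keep the $1/z$ bound for $k\ge1/(2\pi)$ and treat smaller $k$ separately. There $k\beta<1/4$, so the prescribed $\tilde\epsilon=\frac{2}{\sqrt\pi k\beta}e^{-k^2\beta^2}$ exceeds $8e^{-1/16}/\sqrt\pi>2+2/\pi\ge|a_{2n}|+|\tilde a_{2n}|$. The claim is then trivial.
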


The proofs of Theorems \ref{thm:g_odd_proj} and \ref{thm:g_even_proj} are given in Appendices \ref{app:g_odd_proj} and \ref{app:g_even_proj}, respectively.

This approximation converges extremely rapidly with $k$. For example when $a=1/2$ and $k=10$, already $\epsilon\lesssim  10^{-49}$. Thus for large $k$, $a_{2n+1}\approx \tilde{a}_{2n+1}$ to very high precision, and the polynomial described the $a_{2n+1}$ coefficients effectively corresponds to the Chebyshev projection. Therefore the polynomial furnished by the $a_{2n+1}$ is nearly optimal in the sense of Theorem \ref{thm:lebesgue constants}.

\section{Applications}
\label{sec:applications}
In this section, we demonstrate the utility of our polynomial approximations by applying them to a series of relevant problems within the QSVT framework. In each case, we will compare the accuracy our polynomials $g_N^\pm(k,a;x)$ to that of the $f_N^\pm(k,a;x)$ polynomials first introduced in Ref. \cite{low_2017}. We will also compare against the corresponding Chebyshev interpolants, demonstrating numerically that the $g^\pm_N(k,a;x)$ polynomials are nearly optimal. From the (numerically determined) errors on the Chebyshev interpolants, we also place a rigorous lower bound on the error of the optimal polynomial in accordance with Theorem \ref{thm:lebesgue constants}, further quantifying the notion that our approximations are close to optimal.

\subsection{Sign function}
\label{sec:sign_func}
The function 
\begin{equation}
    \mathrm{sign}(x)=\Theta^-(0;x)=\left\{\begin{array}{lr}
        -1, & x<0, \\
       +1,  & x>0,
    \end{array}\right.
\end{equation}
maps all input values to 1 for $x>0$, and can therefore be used to perform oblivious amplitude amplification \cite{PRXQuantum.2.040203}. We first approximate $\mathrm{sign}(x)$ by the smooth functions
\begin{equation}
\begin{split}
        f^-(k_0,0;x)&=\mathrm{erf}(k_0x),\\
        g^-(k,0;x)&=\mathrm{erf}(k\arcsin x).
\end{split}
\end{equation}

\begin{figure}
    \centering
    \includegraphics[scale=0.42]{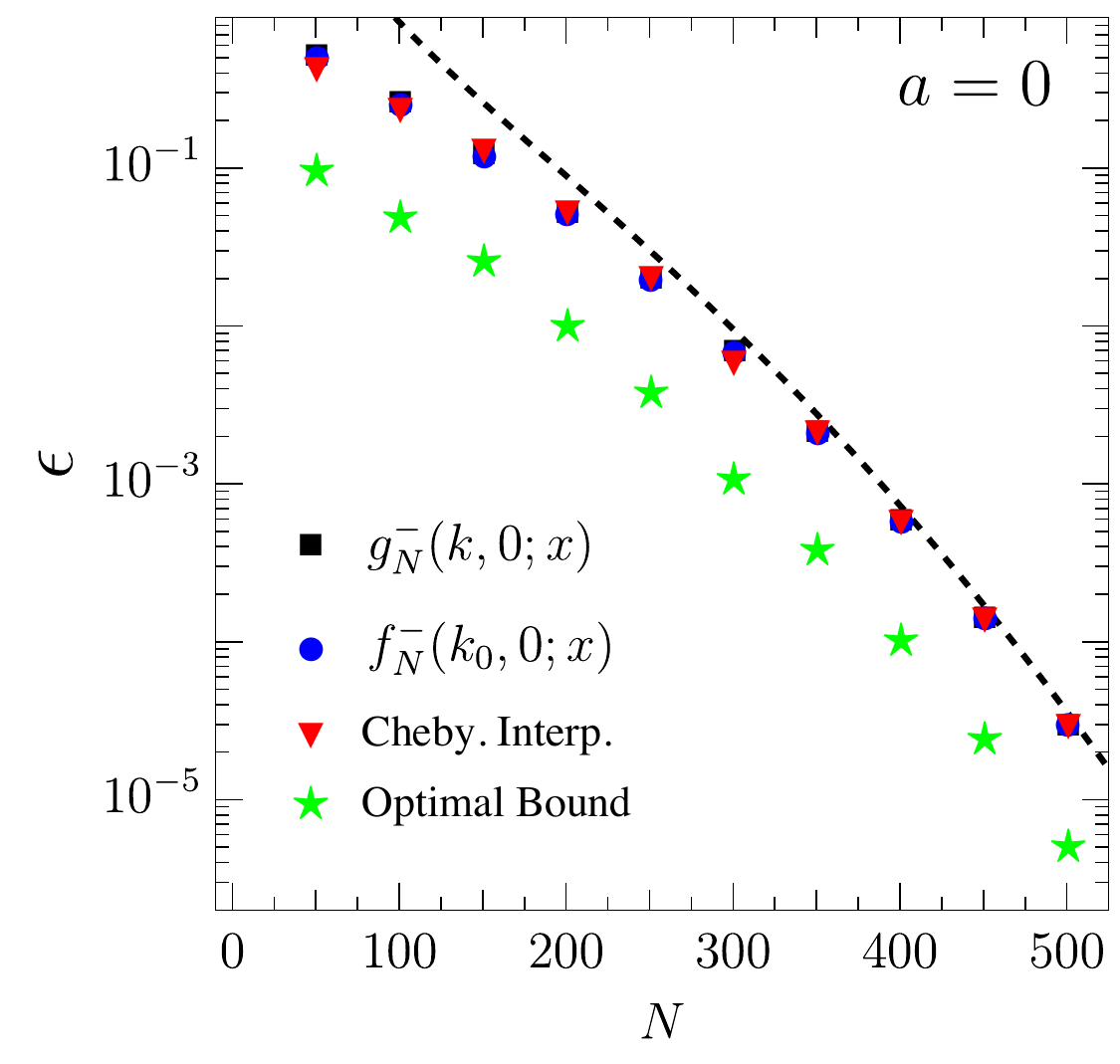}
    \caption{Comparison of the error $\epsilon$ for different polynomial approximations of the sign function $\mathrm{sign}(x)=\Theta^-(0;x)$. Black squares correspond to the polynomial $g_N^-(k,0;x)$ introduced in Eq. \eqref{eq:gpoly_minus}; blue circles denote the polynomial $f_N^-(k_0,0;x)$ from Eq. \eqref{eq:their_erf_poly}; red triangles correspond to the Chebyshev interpolant of $g^-(k,0;x)$; green stars represent a theoretical lower bound on the error of the optimal polynomial approximation to $g^-(k,0;x)$ (see Theorem \ref{thm:lebesgue constants}). The dashed black line denotes the theoretical upper bound on the error of $g_N^-(k,0;x)$ from Eq. \eqref{eq:poly_error}. The value of the scaling parameter $k\approx k_0\approx 94$ was determined by requiring that $\delta<10^{-3}$ on $D_{\kappa(a)}$ with $\kappa=0.05$.}
    \label{fig:poly_sign_compare}
\end{figure}

Based on Eq. \eqref{eq:k_approx}, for $a=0$ and small $\kappa$, we expect the scaling parameters required by $g^-(k,a;x)$ and $f^-(k_0,a;x)$ to be approximately equal, $k\approx k_0$. Therefore, the corresponding polynomials $g^-_N(k,0;x)$ and $f^-(k_0,0;x)$ should achieve similar approximation errors for the same polynomial degree $N$. Indeed, this is what we find numerically: An example comparison is shown in Fig. \ref{fig:poly_sign_compare}. For this test, we required that $g^-(k,0;x)$ and $f^-(k_0,0;x)$ approximate $\mathrm{sign}(x)$ to within maximum error $\delta=10^{-3}$ on the region $D_\kappa(0)$ with $\kappa=0.05$. The resulting value of the scale parameter is $k\approx k_0\approx 94$. 

From the numerically computed error on the Chebyshev interpolants, we derive a lower bound on the error of the optimal polynomial approximation to $g^-(k,0;x)$ using Theorem \ref{thm:lebesgue constants}. All three polynomial approximations that we consider yield comparable errors. (The black squares are barely visible in Fig. \ref{fig:poly_sign_compare} as they are hidden beneath the corresponding blue circles and red triangles.) Hence we conclude that both $f^-_N(k,0;x)$ and $g^-_N(k,0;x)$ are nearly optimal polynomial approximations of $\mathrm{sign}(x)$.

Moreover, we see that the theoretical upper bound on the error in Eq. \eqref{eq:poly_error} is rather tight, and becomes tighter with increasing $N$. Thus even for very large $N$, where numerical estimates of the error may be computationally difficult, one can rely on the theoretical upper bound to provide a tight estimate of the polynomial order $N$ required to achieve the desired approximation error.

For these choices of $a=0$ and $k\approx 94$, $\tilde{\epsilon}=\|g^-(k,0;x)-\tilde{g}^-(k,0;x)\|$ is suppressed by $\exp(-2209\pi^2)\sim 10^{-9469}$. Thus, in a practical application such as this, the difference between the function described by $g^-_N(k,a;x)$ and $g^-(k,a;x)$ is completely negligible.

\subsection{Eigenvalue Thresholding}
The eigenvalue threshold function
\begin{equation}
    \Theta^\mathrm{thr}(a;x)\equiv 1-\Theta^+(a;x)=\left\{\begin{array}{cc}
       1,  & |x|<a, \\
       0,  &  |x|>a,
    \end{array}\right.
\end{equation}
can be used to filter the spectrum of a matrix according to whether the singular values (equivalently, eigenvalues) are above or below the threshold value $a$. We begin by
approximating the discontinuous function $\Theta^\mathrm{thr}(a;x)$ by the smooth functions
\begin{equation}
    \begin{split}
        f^\mathrm{thr}(k_0,a;x)&\equiv 1-f^+(k_0,a;x),\\
        g^\mathrm{thr}(k,a;x)&\equiv 1-g^+(k,a;x).
    \end{split}
\end{equation}
For $0<\kappa<1$ and $0<a<1-\kappa/2$, the values of $k$ and $k_0$ required to approximate $\Theta^\mathrm{thr}(a;x)$ to within error $\delta$ on $D_{\kappa}(a)$ are determined by Eqs. \eqref{eq:k_req_gplus} and \eqref{eq:k0_scale}, respectively. The smooth functions $f^\mathrm{thr}(k_0,a;x)$ and $g^\mathrm{thr}(k,a;x)$ can then be approximated by the polynomials
\begin{equation}
    \begin{split}
        f_N^\mathrm{thr}(k_0,a;x)&\equiv 1-f_N^+(k_0,a;x),\\
        g_N^\mathrm{thr}(k,a;x)&\equiv 1-g_N^+(k,a;x).
    \end{split}
\end{equation}

\begin{figure}
    \centering
    \includegraphics[scale=0.42]{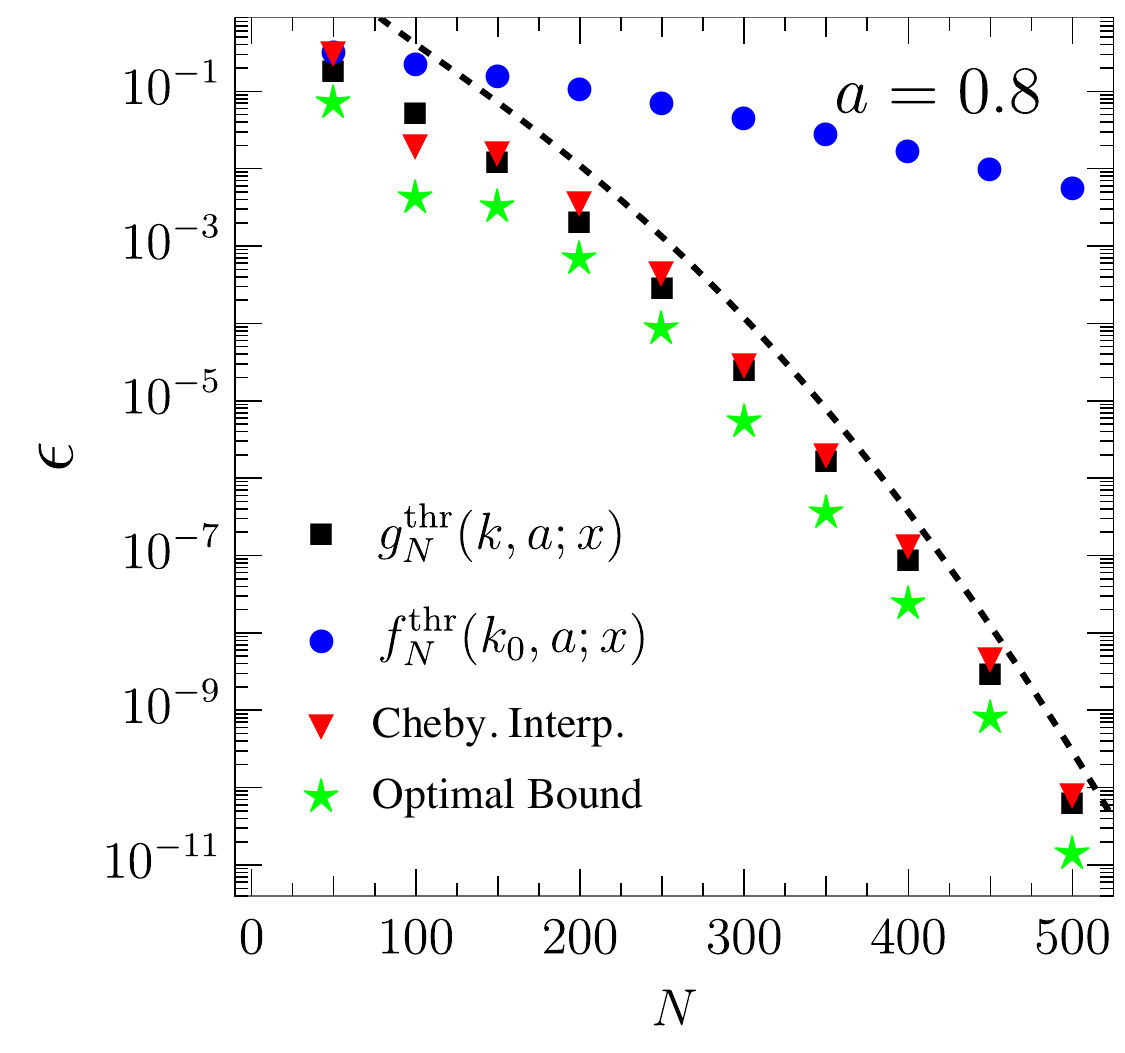}
    \caption{As in Fig. \ref{fig:poly_sign_compare} but for different polynomial approximations of the eigenvalue threshold function $\Theta^\mathrm{thr}(x)$. Black squares correspond to the polynomial $g_N^\mathrm{thr}(k;x)$; blue circles correspond to the polynomial $f_N^\mathrm{thr}(k_0;x)$; red triangles correspond to the degree-$N$ Chebyshev interpolant of $g^\mathrm{thr}(k;x)$; green stars represent a theoretical lower bound on the error of the optimal polynomial approximation to $g^\mathrm{thr}(k;x)$. The dashed black line denotes the theoretical upper bound on the error of $g_N^\mathrm{thr}(k;x)$ from Eq. \eqref{eq:poly_error}. The value of the scaling parameters $k$, $k_0$ were determined by requiring $\delta\leq 10^{-3}$ with $\kappa=0.05$ (see text).}
    \label{fig:poly_thr_compare}
\end{figure}

Figure \ref{fig:poly_thr_compare} compares the errors in the polynomial approximations of the eigenvalue threshold function for an example calculation with threshold $a=0.8$. As in the previous section, we adopt $\kappa=0.05$ and require the error $\delta\leq10^{-3}$, yielding scaling parameter values of $k\approx 58$, $k_0\approx 94$. Based on the discussion in Sec. \ref{sec:remarks}, we expect that for $f_M^\mathrm{thr}(k_0,a;x)$ to achieve the same polynomial approximation error as $g_N^\mathrm{thr}(k,a;x)$ requires $M/N\approx (1+a)/\sqrt{1-a^2}= 3$ when $a=0.8$. We see this behavior reflected in Fig. \ref{fig:poly_thr_compare}, where the polynomial $g_N^\mathrm{thr}(k,a;x)$ achieves an exponentially smaller error than $f_N^\mathrm{thr}(k_0,a;x)$ at the same order $N$. As in the previous example, the error in the polynomial approximation $g_N^\mathrm{thr}(k,a;x)$ is competitive with the error in the corresponding Chebyshev interpolant. Therefore, the numerics support our assertion that $g_N^\mathrm{thr}(k,a;x)$ is a nearly optimal polynomial approximation of the eigenvalue threshold function.

The theoretical bound on the error $\epsilon$ from Eq. \eqref{eq:poly_error} is again relatively tight, though a bit looser than in Fig. \ref{fig:poly_sign_compare}. This is to be expected, as Eq. \eqref{eq:poly_error} was derived by assuming that the bound $|\sqrt{1-a^2}U_{2n}(a)|\leq 1$ is saturated. This is true when $a=0$ but not true for generic $a>0$.

\subsection{Phase Estimation}
\label{sec:QPE}
Within the QSVT framework, quantum phase estimation can be performed by applying the transformation \cite{PRXQuantum.2.040203}
\begin{equation}
    \Theta^\mathrm{QPE}(x)\equiv1-2\Theta^\pm(1/\sqrt{2};x),
\end{equation}
which can be approximated by the smooth functions
\begin{equation}
\begin{split}
    f^\mathrm{QPE}(k;x)&\equiv 1-2f^+(k,1/\sqrt{2};x),\\
    g^\mathrm{QPE}(k;x)&\equiv1-2g^+(k,1/\sqrt{2};x).
    \label{eq:f_g_qpe}
\end{split}
\end{equation}
Thus the required value of the shift parameter is $a=1/\sqrt{2}$. We define the corresponding degree-$N$ even polynomial approximations
\begin{equation}
\begin{split}
    f_N^\mathrm{QPE}(k_0;x)&\equiv  1-2f^+_N(k_0,1/\sqrt{2};x),\\
    g_N^\mathrm{QPE}(k;x)&\equiv 1-2g_N^+(k;1/\sqrt{2};x).
    \label{eq:f_g_N_QPE}
\end{split}
\end{equation}
In Eqs. \eqref{eq:f_g_qpe} and \eqref{eq:f_g_N_QPE}, the factor of two multiplying $g^+(k,a;x)$ and $g_N^+(k;a;x)$ results in an additional factor of two in the error estimates of Theorems \ref{thm:g_plus_fit_proof}, \ref{thm:gtilde_even}, and \ref{thm:gpoly_even}. That is, to achieve errors $\|g^\mathrm{QPE}(k;x)-\Theta^\mathrm{QPE}(x)\|_{D_{\kappa(a)}}\leq \delta$ and $\|g_N^\mathrm{QPE}(k;x)-g^\mathrm{QPE}(k;x)\|\leq\epsilon$, one should adopt values of $k$ and $N$ corresponding to $\delta/2$ and $\epsilon/2$ in Eqs. \eqref{eq:k_req_gminus} and \eqref{eq:N_eps_plus}, respectively. Analogous statements hold for $f^\mathrm{QPE}(k_0;x)$ and $f^\mathrm{QPE}_N(k_0;x)$.

\begin{figure}
    \centering
    \includegraphics[scale=0.42]{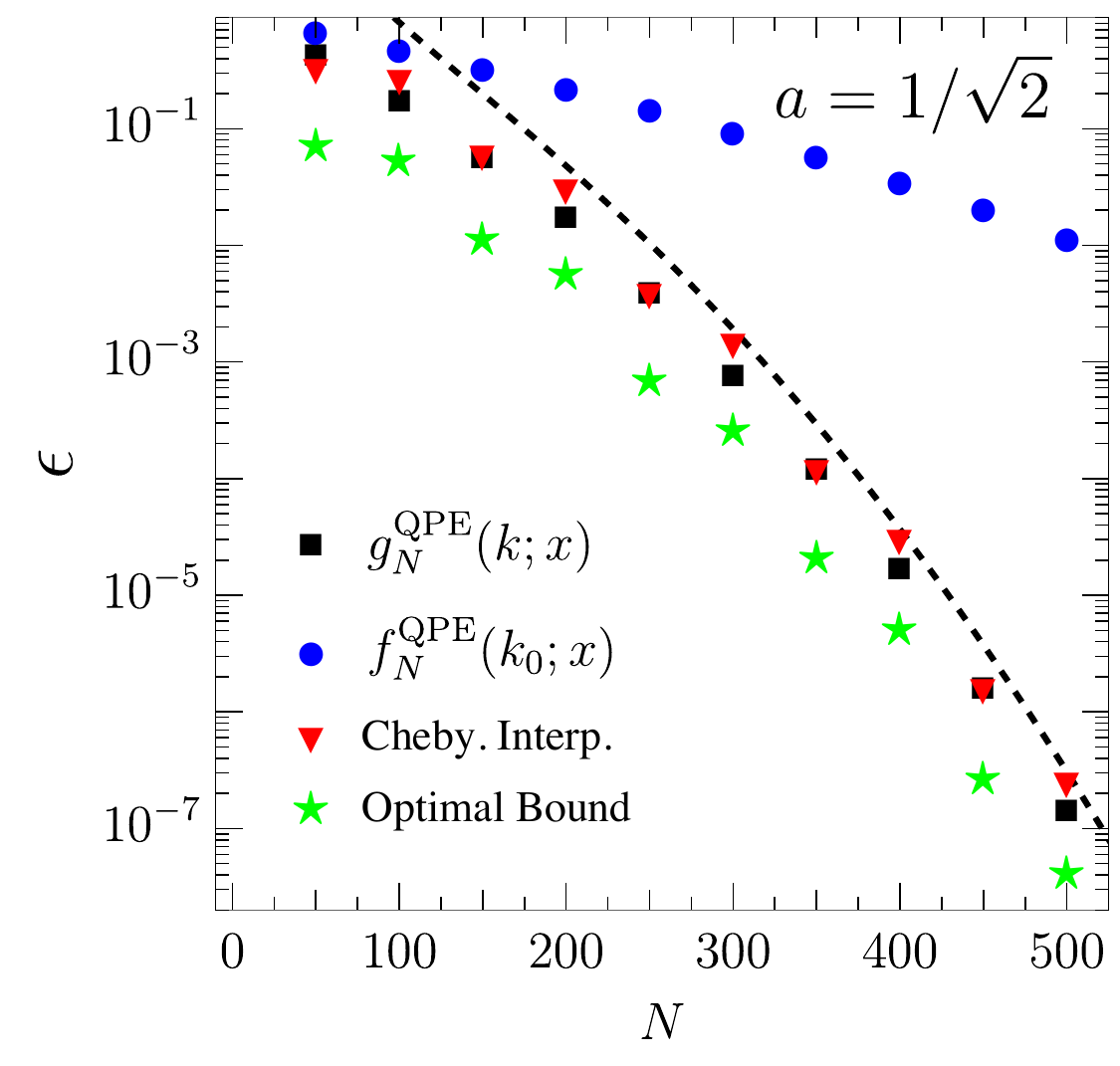}
    \caption{As in Fig. \ref{fig:poly_sign_compare} but for different polynomial approximations of the QPE function $\Theta^\mathrm{QPE}(x)$. Black squares correspond to the polynomial $g_N^\mathrm{QPE}(k;x)$; blue circles correspond to the polynomial $f_N^\mathrm{QPE}(k_0;x)$; red triangles correspond to the degree-$N$ Chebyshev interpolant of $g^\mathrm{QPE}(k;x)$; green stars represent a theoretical lower bound on the error of the optimal polynomial approximation to $g^\mathrm{QPE}(k;x)$. The dashed black line denotes the theoretical upper bound on the error of $g_N^\mathrm{QPE}(k;x)$ from Eq. \eqref{eq:poly_error}. The value of the scaling parameters $k$, $k_0$ were determined by requiring $\delta\leq 10^{-3}$ with $\kappa=0.05$ (see text).}
    \label{fig:poly_qpe_compare}
\end{figure}

Figure \ref{fig:poly_qpe_compare} shows the resulting error in the polynomial approximations $g_N^\mathrm{QPE}(k;x)$ and $f_N^\mathrm{QPE}(k_0;x)$ as a function of the polynomial order $N$. Again, we fix the values of the scaling parameters $k$ and $k_0$ by requiring that $\delta=10^{-3}$ with $\kappa=0.05$. The resulting scaling parameter for $g^\mathrm{QPE}(k;x)$ is $k\approx 72$ whereas to achieve the same error for $f^\mathrm{QPE}(k_0;x)$ requires $k_0\approx 99$. After rescaling, $k'=(1+a)k_0\approx 169$ is the value of the scaling parameter that enters into the polynomial in Eq. \eqref{eq:their_erf_poly}. We expect that for $f_M^\mathrm{QPE}(k;x)$ to achieve the same error $\epsilon$ as $g_N^\mathrm{QPE}(k;x)$ will require $M/N\approx (1+a)/\sqrt{1-a^2}=1+\sqrt{2}\approx 2.4$.

For this choice of $a=1/\sqrt{2}$ and $k\approx 72$, $\tilde{\epsilon}=\|g^+(k,a;x)-\tilde{g}^+(k,a;x)\|$ is suppressed by $\exp(-324\pi^2)\sim 10^{-1389}$. Again, for the reasonable choices of $\kappa=0.05$, $\delta=10^{-3}$, the resulting value of scaling parameter $k$ is such that the difference between $g^+(k,a;x)$ and $\tilde{g}^+(k,a;x)$ is completely negligible. Thus, the polynomial $g_N^\mathrm{QPE}(k;x)$ is essentially equal to the Chebyshev projection of $g^\mathrm{QPE}(k;x)$ and should therefore be nearly optimal.

In Fig. \ref{fig:poly_qpe_compare}, the polynomial $g_N^\mathrm{QPE}(k;x)$ shows comparable errors to the Chebyshev interpolant, supporting our conclusion that $g_N^\mathrm{QPE}(k;x)$ is a nearly optimal polynomial approximation of the QPE function. Again, the theoretical error bound appears tight, especially with increasing $N$. By comparison, the polynomial $f_N^\mathrm{QPE}(k_0;x)$ is far from optimal, yielding errors that are exponentially worse than $g_N^\mathrm{QPE}(k;x)$ at the same order.

\subsection{Linear amplitude amplification}
The aim of linear amplitude amplification \cite{low_2017,PRXQuantum.2.040203} is to multiply input values in the range $[-\Gamma,\Gamma]$ by the constant value $1/2\Gamma$ for some $\Gamma\in(0,1/2]$. The discontinuous function
\begin{equation}
    \Theta^\mathrm{LA}(\Gamma;x)=\frac{x}{2\Gamma}\left[1-\Theta^+(2\Gamma;x)\right],
\end{equation}
performs the desired rescaling for $x\in[-2\Gamma,2\Gamma]$. Values outside of this range are set to zero. We can smoothly approximate $\Theta^\mathrm{LA}(\Gamma;x)$ by 
\begin{equation}
\begin{split}
        f^\mathrm{LA}(k,\Gamma;x)&\equiv\frac{x}{2\Gamma}\left[1-f^+(k,2\Gamma;x)\right],\\
        g^\mathrm{LA}(k,\Gamma;x)&\equiv\frac{x}{2\Gamma}\left[1-g^+(k,2\Gamma;x)\right].
\end{split}
\end{equation}
The function $\Theta^\mathrm{LA}(\Gamma;x)$ rescales values within the range $x\in[-2\Gamma,2\Gamma]$, but our original goal was only to rescale values within $[-\Gamma,\Gamma]$. Thus, it suffices to choose $\kappa=\Gamma$, so that for the appropriate value of scaling parameter $k$, our smooth approximation of $\Theta^\mathrm{LA}(\Gamma;x)$ will be accurate within error $\delta$ for $x\in[-\Gamma,\Gamma]$. Note that we are not particularly concerned with the behavior of $g^\mathrm{LA}(k,\Gamma;x)$ for $|x|>\Gamma$; the only crucial requirement is that of boundedness, $|g^\mathrm{LA}(k,\Gamma;x)|\leq 1$ for all $x\in[-1,1]$, which follows from the fact that $|g^+(k,2\Gamma;x)|\leq 1$ and $|x/2\Gamma|\leq 1$ for $x\in[-1,1]$. 

\begin{equation}
\begin{split}
        f^\mathrm{LA}_N(k,\Gamma;x)&\equiv\frac{x}{2\Gamma}\left[1-f^+_N(k,2\Gamma;x)\right]\\
        g^\mathrm{LA}_N(k,\Gamma;x)&\equiv\frac{x}{2\Gamma}\left[1-g^+_N(k,2\Gamma;x)\right]
\end{split}
\end{equation}

As we have multiplied the functions $g^+(k,2\Gamma;x)$ by the linear function $x/2\Gamma$, the error bounds derived in Theorems \ref{thm:g_plus_fit_proof} and \ref{thm:gpoly_even} do not apply. Fortunately, through a slight modification of the proof of Theorem \ref{thm:g_plus_fit_proof}, we find that for all $\delta>0$, choosing
\begin{equation}
    k=\frac{1}{\sqrt{2}\eta}\sqrt{W\left(\frac{1}{2\pi \delta^2}\right)},
\end{equation}
guarantees that 
\begin{equation}
    \max_{|x|<\Gamma}\left|\Theta^\mathrm{LA}(\Gamma;x)-g^\mathrm{LA}(k,\Gamma;x)\right|\leq \delta.
    \end{equation}

Again, we are not concerned with the deviation between $\Theta^\mathrm{LA}(k;x)$ and $g^\mathrm{LA}(k;x)$ for $|x|>\Gamma$, leading to the one-sided bound above. Similarly, one can show that
\begin{equation}
    \|g_N^\mathrm{LA}(k,\Gamma;x)-g^\mathrm{LA}(k,\Gamma;x)\|\leq \frac{4}{\pi}\frac{1}{2\Gamma}\frac{k^2}{N^2}e^{-N^2/4k^2},
\end{equation}
so that for all $\epsilon>0$, choosing 
\begin{equation}
    N=2\ceil*{k\sqrt{W\left(1/(2\pi\Gamma \epsilon)\right)}},
\end{equation}
guarantees that $\|g_N^\mathrm{LA}(k,\Gamma;x)-g^\mathrm{LA}(k,\Gamma;x)\|<\epsilon$. 

With these results established, we now proceed with a numerical example, choosing $\Gamma=\kappa=1/4$. Setting the error tolerance to $\delta=10^{-3}$ yields $k\approx 8$, $k_0\approx 9$. Compared to previous examples, the required scaling parameters values are modest due to the relatively large window parameter $\kappa=0.25$.

The resulting approximation errors $\epsilon$ are shown in Fig. \ref{fig:poly_linamp_compare}. Due to the smaller scaling parameters $k$, $k_0$, all three polynomial approximations converge much faster than in previous examples (note the scale of the $x$ axis). As before, we find that $g^\mathrm{LA}_N(k,\Gamma;x)$ yields comparable errors to the Chebyshev interpolant, providing numerical evidence that $g^\mathrm{LA}_N(k,\Gamma;x)$ is a nearly optimal polynomial approximation of the linear amplitude amplification function $\Theta^\mathrm{LA}(\Gamma;x)$. As the linear amplification function corresponds to a shifted sign function (multiplied by a linear function), we expect that the corresponding polynomial $f_N^\mathrm{LA}(k_0,\Gamma;x)$ will be far from optimal; indeed, Fig. \ref{fig:poly_thr_compare} supports this conclusion.

\begin{figure}
    \centering
    \includegraphics[scale=0.42]{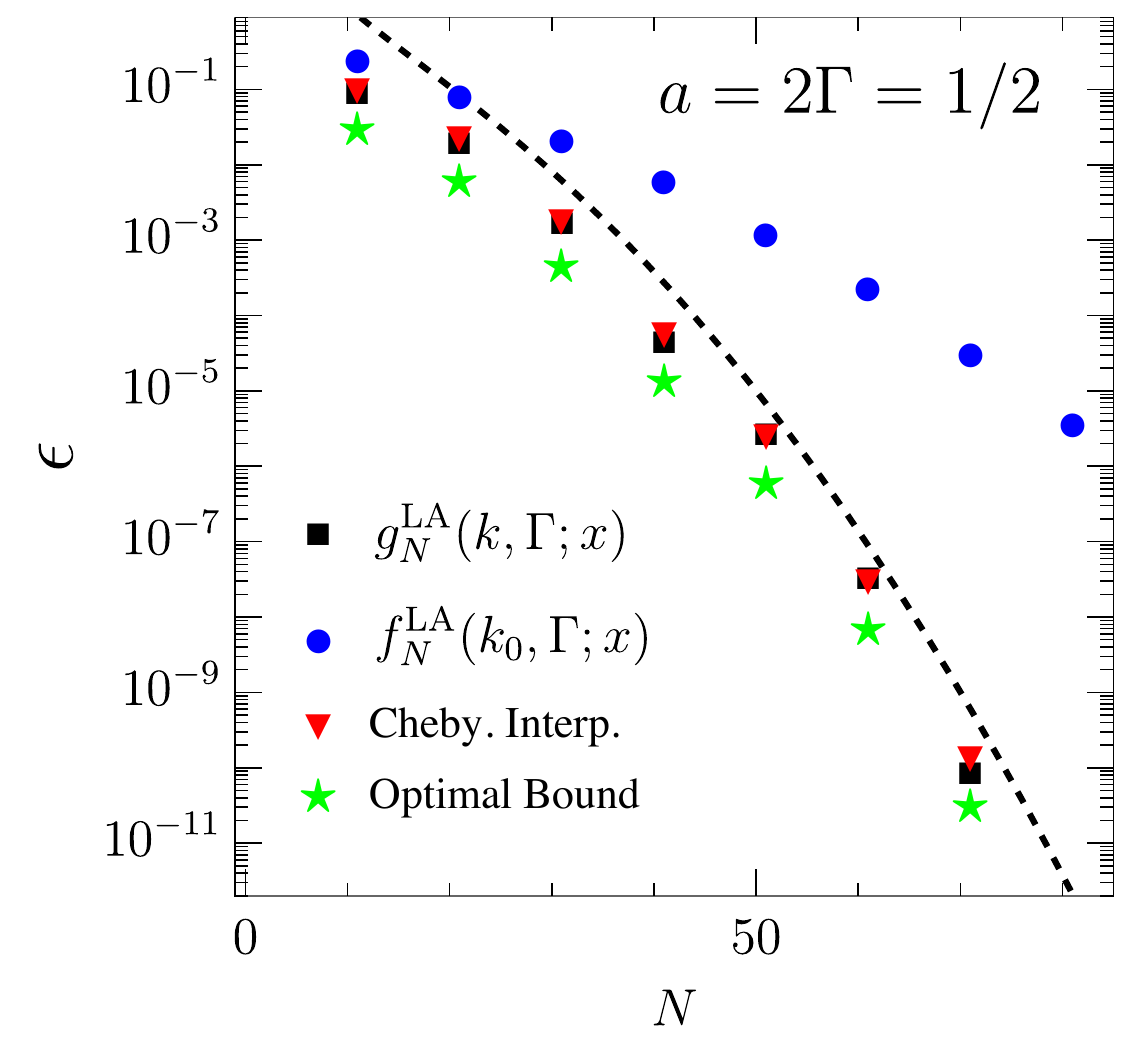}
        \caption{As in Fig. \ref{fig:poly_sign_compare} but for different polynomial approximations of the linear amplification function $\Theta^\mathrm{LA}(\Gamma;x)$. Black squares correspond to the polynomial $g_N^\mathrm{LA}(k,\Gamma;x)$; blue circles correspond to the polynomial $f_N^\mathrm{LA}(k_0,\Gamma;x)$; red triangles correspond to the degree-$N$ Chebyshev interpolant of $g^\mathrm{LA}(k,\Gamma;x)$; green stars represent a theoretical lower bound on the error of the optimal polynomial approximation to $g^\mathrm{LA}(k,\Gamma;x)$. The dashed black line denotes the theoretical upper bound on the error of $g_N^\mathrm{LA}(k,\Gamma;x)$ from Eq. \eqref{eq:poly_error}. The value of the scaling parameters $k$, $k_0$ were determined by requiring $\delta\leq 10^{-3}$ with $\kappa=\Gamma=0.25$ (see text).}
    \label{fig:poly_linamp_compare}
\end{figure}

\subsection{Boundedness}
When constructing polynomials $p_N(x)$ in the above examples, we have largely ignored the requirement that QSVT polynomials must be bounded, $|p_N(x)|\leq 1$ for $x\in[-1,1]$. We have noted that each of the relevant smooth function approximations --- $g^\pm(k,a;x)$, $g^\mathrm{QPE}(k;x)$, etc. --- is bounded by $\pm 1$ on this interval. Thus, the corresponding polynomial approximations can only violate this bound to the extent that they deviate from the smooth function, which is precisely the error $\epsilon=\|p(x)-p_N(x)\|$. Thus we can ensure that each polynomial is properly bounded by a simple rescaling $p_N(x)\rightarrow p_N(x)/(1+\epsilon)$.

\section{Summary}
In this work, we have provided simple polynomial approximations $g_N^\pm(k,a;x)$ to the even and odd step functions $\Theta^\pm(a;x)$ on the interval $[-1,1]$. As demonstrated above, previously known polynomial approximations of the even and odd step functions $\Theta^\pm(a;x)$ on $[-1,1]$ perform poorly as the location of the step-function discontinuity is shifted away from the origin and toward the endpoints of the interval; that is, as $a\rightarrow 1$. In contrast, we introduced polynomials $g_N^\pm(k,a;x)$ that are nearly optimal approximations of $\Theta^\pm(a;x)$  for all values of $0\leq a<1$. 

In addition to yielding nearly optimal errors, the polynomials $g_N^\pm(k,a;x)$ are described by exceedingly simple Chebyshev coefficients. Moreover, we derived a theoretical upper bound on the approximation error that, based on our numerical investigations, is tight. Thus, the results of Theorems \ref{thm:g_minus_fit_proof} -- \ref{thm:gpoly_even} can be directly utilized to derive tight estimates for the value of the scaling parameter $k$ and the degree $N$ of the polynomial approximation required to achieve total error $\delta+\epsilon$.  To this end, the polynomial approximations $g_N^\pm(k,a;x)$ --- and those derived from them in Sec. \ref{sec:applications} for specific algorithmic applications --- should be useful in practical QSVT calculations on quantum hardware.  

\section*{Acknowledgments}
This work was carried out under the auspices of the National Nuclear Security Administration of the U.S. Department of Energy at Los Alamos National Laboratory under Contract No. 89233218CNA000001. The research presented in this article was partially supported by the Laboratory Directed Research and Development program of Los Alamos National Laboratory under project numbers 20251163PRD3 and 20260043DR.

\bibliographystyle{unsrt}
\bibliography{poly_refs}

\appendix
\begin{widetext}
\section{Proof of Theorem \ref{thm:g_minus_fit_proof}}
\label{app:g_minus_fit_proof}
\begin{proof}
Letting $D_{\kappa(a)}$ be the region defined in Eq. \eqref{eq:D_kappa}, our goal is to bound the error 
    \begin{equation}
        \|\Theta^-(a;x)-g^-(k,a;x)\|_{D_\kappa(a)}=\max(M_1,M_2),
    \end{equation}
    where
    \begin{equation}
        \begin{split}
            M_1&\equiv \max_{|x|\in[0,a-\kappa/2]}|g^-(k,a;x)|,\\
            M_2&\equiv\max_{x\in[a+\kappa/2,1]}|1-g^-(k,a;x)|.
        \end{split}
    \end{equation}
    As both $\Theta^-(a;x)$ and $g^-(k,a;x)$ are odd functions, it suffices to consider $0\leq x\leq 1$. We begin by bounding $M_1$. When $0\leq x\leq a-\kappa/2$, then $\arcsin x < \arcsin a$, so that
    \begin{equation}
        \begin{split}
                |g^-(k,a;x)|&=\frac{1}{\sqrt{\pi}}\int_{k(\arcsin a-\arcsin x)}^{k(\arcsin a+\arcsin x)}dt~e^{-t^2}\\
                &\leq \frac{1}{2\sqrt{\pi}}\frac{1}{k(\arcsin a - \arcsin x)}e^{-k^2(\arcsin a - \arcsin x)^2},
        \end{split}
    \end{equation}
    which is maximized for $x\in[0,a-\kappa/2]$ when $x=a-\kappa/2$. Therefore
    \begin{equation}
        M_1\leq \frac{1}{2\sqrt{\pi}}\frac{1}{k\eta}e^{-k^2\eta^2},
    \end{equation}
    where $\eta\equiv \arcsin(a) - \arcsin(a-\kappa/2)$. Thus for any $\delta >0$, choosing
    \begin{equation}
        k=\frac{1}{\sqrt{2}\eta}\sqrt{W\left(\frac{1}{2\pi\delta^2}\right)},
    \end{equation}
    guarantees that $M_1\leq \delta$. Next we derive a bound on $M_2$. When $a+\kappa/2\leq x\leq 1$, then $\arcsin x>\arcsin a$ and
    \begin{equation}
    \begin{split}
        |1-g^-(k,a;x)|&=\left|\frac{1}{2}\left\{\mathrm{erfc}[k(\arcsin x -\arcsin a)]+\mathrm{erfc}[k(\arcsin x + \arcsin a)]\right\}\right|\\
        &=\frac{1}{\sqrt{\pi}}\int_{k(\arcsin x - \arcsin a)}^{\infty} dt~e^{-t^2}+\frac{1}{\sqrt{\pi}}\int_{k(\arcsin x + \arcsin a)}^{\infty}dt~ e^{-t^2}\\
        &\leq\frac{2}{\sqrt{\pi}}\int_{k(\arcsin x - \arcsin a)}^{\infty} dt\frac{t}{k(\arcsin x - \arcsin a)}e^{-t^2}\\
        & =\frac{1}{\sqrt{\pi}}\frac{e^{-k^2(\arcsin x-\arcsin a)^2}}{k(\arcsin x - \arcsin a)},
    \end{split}
    \end{equation}
    which is maximized for $x\in[a+\kappa/2,1]$ when $x=a+\kappa/2$. Therefore
    \begin{equation}
        M_2\leq \frac{1}{\sqrt{\pi}}\frac{1}{k\nu}e^{-k^2\nu^2},
    \end{equation}
    where $\nu\equiv \arcsin(a+\kappa/2)-\arcsin(a)$. Thus for any $\delta>0$, choosing
    \begin{equation}
        k=\frac{1}{\sqrt{2}\nu}\sqrt{W\left(\frac{2}{\pi\delta^2}\right)}
    \end{equation}
    guarantees that $M_2\leq \delta$. Putting these results together, choosing 
     \begin{equation}
        k=\mathrm{max}\left[\frac{1}{\sqrt{2}\eta}\sqrt{W\left(\frac{1}{2\pi\delta^2}\right)},\frac{1}{\sqrt{2}\nu}\sqrt{W\left(\frac{2}{\pi\delta^2}\right)}\right]
    \end{equation}
    guarantees that 
    \begin{equation}
        \|\Theta^-(a;x)-g^-(k,a;x)\|_{D_\kappa(a)} \leq \delta.
    \end{equation}

\end{proof}

\section{Proof of Theorem \ref{thm:gtilde_odd}}
\label{app:gtilde_odd}
\begin{proof}
Let $x=\cos\xi$, $a=\cos\alpha$. Then the limit of the polynomial series in Eq. \eqref{eq:gpoly_minus} can be written as
\begin{equation}
    \begin{split}
        \tilde{g}^-(k,a;x)&=\lim_{N\rightarrow \infty}g^-_N(k,a;x)\\
        &=\frac{2}{\pi}\sum_{n=0}^\infty\frac{\sin[(2n+1)(\alpha+\xi)]+\sin[(2n+1)(\alpha-\xi)]}{2n+1}e^{-(2n+1)^2/4k^2}\\
        &=\frac{1}{\pi}\int_0^{\alpha+\xi}dz~\theta_2(z,e^{-1/k^2})+\frac{1}{\pi}\int_0^{\alpha-\xi}dz~\theta_2(z,e^{-1/k^2}),
    \end{split}
\end{equation}
where
\begin{equation}
    \theta_2(z,q)\equiv 2q^{1/4}\sum_{n=0}^{\infty} q^{n(n+1)}\cos[(2n+1)z],
\end{equation}
is the second elliptic theta function. For $0\leq a<1$ and $-1\leq x\leq 1$, it follows that $0<\alpha+\xi\leq3\pi/2$ and $-\pi<\alpha-\xi\leq \pi/2$. Using the transformation properties of the theta functions under modular transformations, we can equivalently write

\begin{equation}
    \theta_2(z,e^{-1/k^2})=k\sqrt{\pi}e^{-k^2z^2}\left[1+\sum_{n=1}^{\infty}(-1)^ne^{-k^2\pi^2n^2}\cosh(2m\pi k^2z)\right],
\end{equation}
which can be integrated term-by-term to yield
\begin{equation}
\begin{split}
        &\frac{1}{\pi}\int_0^{\alpha\pm\xi}dz~\theta_2(z,e^{-1/k^2})=\frac{1}{2}\mathrm{erf}[k(\alpha\pm\xi)]+\frac{1}{2}\sum_{n=1}^{\infty}(-1)^n\left\{\mathrm{erf}[k(n\pi+\alpha\pm\xi)]-\mathrm{erf}[k(n\pi-\alpha\mp\xi)]\right\}.
\end{split}
\end{equation}
Recalling that 
\begin{equation}
\begin{split}
        g^-(k,a;x)&=\frac{1}{2}\left\{\mathrm{erf}[k(\arcsin x+\arcsin a)]+\mathrm{erf}[k(\arcsin x - \arcsin a]\right\}\\
        &=\frac{1}{2}\left\{\mathrm{erf}[k(\pi-\alpha-\xi)]+\mathrm{erf}[k(\alpha-\xi)]\right\},
        \end{split}
\end{equation}
we can write
\begin{equation}
\begin{split}
    \tilde{g}^-(k,a;x)&=g^-(k,a;x)+R_1(k,a;x)+R_2(k,a;x)+R_3(k,a;x),
\end{split}
\end{equation}
where 
\begin{equation}
\begin{split}
    R_1(k,a;x)&\equiv \frac{1}{2}\left\{\mathrm{erf}[k(\alpha+\xi)]-\mathrm{erf}[k(\pi+\alpha+\xi)]\right\},\\
    R_2(k,a;x)&\equiv \frac{1}{2}\sum_{n=1}^{\infty}(-1)^n\left\{\mathrm{erf}[k(n\pi+\alpha-\xi)]-\mathrm{erf}[k(n\pi-\alpha+\xi)]\right\},\\
    R_3(k,a;x)&\equiv \frac{1}{2}\sum_{n=2}^{\infty}(-1)^n\left\{\mathrm{erf}[k(n\pi+\alpha+\xi)]-\mathrm{erf}[k(n\pi-\alpha-\xi)]\right\}.
\end{split}
\end{equation}
Therefore the error is bounded by
\begin{equation}
    \begin{split}
        \|\tilde{g}^-(k,a;x)-g^-(k,a;x)\|&=\|R_1(k,a;x)+R_2(k,a;x)+R_3(k,a;x)\|\\
        &\leq \|R_1(k,a;x)\|+\|R_2(k,a;x)\|+\|R_3(k,a;x)\|
    \end{split}
\end{equation}

\begin{equation}
\begin{split}
    |R_1(k,a;x)|&=\frac{1}{2}|\mathrm{erf}[k(\alpha+\xi)]-\mathrm{erf}[k(\pi+\alpha+\xi)]|\\
    &= \frac{1}{\sqrt{\pi}}\int_{k(\alpha+\xi)}^{k(\pi+\alpha+\xi)}dt~e^{-t^2}\\
    &\leq \frac{1}{2\sqrt{\pi}k(\alpha+\xi)}e^{-k^2(\alpha+\xi)^2}.
\end{split}
\end{equation}
This error is maximized when $\xi$ attains its minimal value $\xi=0$ corresponding to $x=1$. Therefore
\begin{equation}
    \|R_1(k,a;x)\|\leq \frac{1}{2\sqrt{\pi}k\alpha}e^{-k^2\alpha^2}.
\end{equation}

The second term can be bounded by noting that $R_2(k,a;x)$ is an alternating series where successive terms decrease in magnitude. Therefore, the absolute value of the sum is bounded by the first term
\begin{equation}
    \begin{split}
     |R_2(k,a;x)|&=\left|\frac{1}{2}\sum_{n=1}^{\infty}(-1)^n\left\{\mathrm{erf}[k(n\pi+\alpha-\xi)]-\mathrm{erf}[k(n\pi-\alpha+\xi)\right\}\right|\\
     &\leq \frac{1}{2}|\mathrm{erf}[k(\pi+\alpha-\xi)]-\mathrm{erf}[k(\pi-\alpha+\xi)]|\\
     &=\left\{\begin{array}{cc}
        \frac{1}{\sqrt{\pi}}\int_{k(\pi-\alpha+\xi)}^{k(\pi+\alpha-\xi)}dt~e^{-t^2},  & \alpha>\xi \\
          \frac{1}{\sqrt{\pi}}\int_{k(\pi+\alpha-\xi)}^{k(\pi-\alpha+\xi)}dt~e^{-t^2},  & \alpha<\xi  
     \end{array}\right.\\
     &\leq \left\{\begin{array}{cc}
       \frac{1}{\sqrt{\pi}}\int_{k(\pi-\alpha+\xi)}^{k(\pi+\alpha-\xi)}dt~\frac{t}{k(\pi-\alpha+\xi)}e^{-t^2} & \alpha>\xi  \\
        \frac{1}{\sqrt{\pi}}\int_{k(\pi+\alpha-\xi)}^{k(\pi-\alpha+\xi)}dt~\frac{t}{k(\pi+\alpha-\xi)}e^{-t^2} & \alpha<\xi
     \end{array}\right. \\
      &\leq \left\{\begin{array}{cc}
       \frac{1}{2\sqrt{\pi}k(\pi-\alpha+\xi)}e^{-k^2(\pi-\alpha+\xi)^2} & \alpha>\xi  \\
        \frac{1}{2\sqrt{\pi}k(\pi+\alpha-\xi)}e^{-k^2(\pi+\alpha-\xi)^2} & \alpha<\xi
     \end{array}\right. \\
    \end{split}
\end{equation}
When $\alpha>\xi$, the bound is maximized when $\xi=0$. When $\alpha<\xi$, the bound is maximized when $\xi=\pi$. Therefore
\begin{equation}
    \|R_2(k,a;x)\|\leq \left\{\begin{array}{lc}
       \frac{1}{2\sqrt{\pi}k(\pi-\alpha)}e^{-k^2(\pi-\alpha)^2}, & \alpha>\xi  \\
        \frac{1}{2\sqrt{\pi}k\alpha}e^{-k^2\alpha^2}, & \alpha<\xi
    \end{array}\right.
\end{equation}
As $0\leq a<1$ implies $0<\alpha\leq \pi/2$, it follows that the second bound is always larger, and therefore we take
\begin{equation}
   \|R_2(k,a;x)\|\leq \frac{1}{2\sqrt{\pi}k\alpha}e^{-k^2\alpha^2}.
\end{equation}

For the final term, $0<\alpha+\xi<2\pi$. As in the previous term, $R_3(k,a;x)$ is an alternating series with terms that decrease in magnitude and is therefore bounded by the first term.
\begin{equation}
    \begin{split}
     |R_3(k,a;x)|&\leq \frac{1}{2}\left\{\mathrm{erf}[k(2\pi+\alpha+\xi)]-\mathrm{erf}[k(2\pi-\alpha-\xi)]\right\}\\
     &=
        \frac{1}{\sqrt{\pi}}\int_{k(2\pi-\alpha-\xi)}^{k(2\pi+\alpha+\xi)}dt~e^{-t^2}\\
     &\leq 
       \frac{1}{\sqrt{\pi}}\int_{k(2\pi-\alpha+\xi)}^{k(\pi-\alpha-\xi)}dt~\frac{t}{k(2\pi-\alpha-\xi)}e^{-t^2}  \\
      &\leq 
       \frac{1}{2\sqrt{\pi}k(2\pi-\alpha-\xi)}e^{-k^2(2\pi-\alpha-\xi)^2}  
    \end{split}
\end{equation}
This is maximized when $\xi=\pi$, so that
\begin{equation}
    \|R_3(k,a;x)\|\leq  \frac{1}{2\sqrt{\pi}k(\pi-\alpha)}e^{-k^2(\pi-\alpha)^2}
\end{equation}
Combining these three terms, we arrive at
\begin{equation}
    \begin{split}
        \|\tilde{g}^-(k,a;x)-g^-(k,a;x)\|&\leq \frac{1}{\sqrt{\pi}k\alpha}e^{-k^2\alpha^2}+\frac{1}{2\sqrt{\pi}k(\pi-\alpha)}e^{-k^2(\pi-\alpha)^2}\\
        &\leq \frac{3}{2\sqrt{\pi}k\alpha}e^{-k^2\alpha^2}.
    \end{split}
\end{equation}
Thus for any $\tilde{\epsilon}>0$, choosing 
\begin{equation}
    k=\frac{1}{\sqrt{2}\alpha}\sqrt{W\left(\frac{9}{2\pi\tilde{\epsilon}^2}\right)},
\end{equation}
guarantees that 
\begin{equation}
    \|\tilde{g}^-(k,a;x)-g^-(k,a;x)\|\leq \tilde{\epsilon}.
\end{equation}

\end{proof}

\section{Proof of Theorem \ref{thm:gpoly_odd}}
\label{app:gpoly_odd}
\begin{proof}
Our goal is to bound
\begin{equation}
          \|\tilde{g}^-(k,a;x)-g^-_N(k,a;x)\|=\left\|\frac{4}{\pi}\sqrt{1-a^2}\sum_{n=(N+1)/2}^{\infty}\frac{U_{2n}(a)}{2n+1}e^{-(2n+1)^2/4k^2}T_{2n+1}(x)\right\|\\
\end{equation}
Noting that $|\sqrt{1-a^2}U_{2n}(a)|\leq 1$ and $|T_{2n+1}(x)|\leq 1$, we have
\begin{equation}
\begin{split}
      \|\tilde{g}^-(k,a;x)-g^-_N(k,a;x)\|&\leq \frac{4}{\pi}\sum_{n=(N+1)/2}^{\infty}\frac{1}{2n+1}e^{-(2n+1)^2/4k^2}\\
      &\leq \frac{4}{\pi}\int_{(N-1)/2}^{\infty}dt ~\frac{1}{2t+1}e^{-(2t+1)^2/4k^2}\\
      &=\frac{1}{\pi}\mathrm{E_1}\left[\frac{N^2}{4k^2}\right]\\
      &\leq \frac{4}{\pi}\frac{k^2}{N^2}e^{-N^2/4k^2},
      \end{split}
\end{equation}
where 
\begin{equation}
    E_1(x)\equiv \int_x^{\infty}dt\frac{e^{-t}}{t},
\end{equation}
is the exponential integral function. Thus for any $\epsilon>0$, choosing 
\begin{equation}
    N=2\ceil*{k\sqrt{W(1/(\pi\epsilon))}}+1,
\end{equation}
guarantees that 
\begin{equation}
     \|\tilde{g}^-(k,a;x)-g^-_N(k,a;x)\|\leq\epsilon.
\end{equation}
\end{proof}

\section{Proof of Theorem \ref{thm:g_plus_fit_proof}}
\label{app:g_plus_fit_proof}
Consider
    \begin{equation}
        \|\Theta^+(a;x)-g^+(k,a;x)\|_{D_\kappa(a)}=\max(M_1,M_2),
    \end{equation}
    where
    \begin{equation}
        \begin{split}
            M_1&\equiv \max_{|x|\in[0,a-\kappa/2]}|g^+(k,a;x)|,\\
            M_2&\equiv\max_{x\in[a+\kappa/2,1]}|1-g^+(k,a;x)|.
        \end{split}
    \end{equation}

First consider the region where $|x|\leq a-\kappa/2$ and
    \begin{equation}
    \begin{split}
        |g^+(k,a;x)|&=\left|\frac{1}{2}\left\{\mathrm{erfc}[k(\arcsin a +\arcsin x)]+\mathrm{erfc}[k(\arcsin a - \arcsin x)]\right\}\right|\\
        &=\frac{1}{\sqrt{\pi}}\int_{k(\arcsin a + \arcsin x)}^{\infty} dt~e^{-t^2}+\frac{1}{\sqrt{\pi}}\int_{k(\arcsin a - \arcsin x)}^{\infty}dt~ e^{-t^2}\\
        &\leq\frac{2}{\sqrt{\pi}}\int_{k(\arcsin a - \arcsin x)}^{\infty} dt\frac{t}{k(\arcsin a - \arcsin x)}e^{-t^2}\\
        & =\frac{1}{\sqrt{\pi}}\frac{e^{-k^2(\arcsin a-\arcsin x)^2}}{k(\arcsin a - \arcsin x)},
    \end{split}
    \end{equation}
    which is maximized at $x=a-\kappa/2$. Therefore
    \begin{equation}
        M_1=\frac{1}{\sqrt{\pi}}\frac{1}{k\eta}e^{-k^2\eta^2},
    \end{equation}
    where $\eta\equiv \arcsin(a)-\arcsin(a-\kappa/2)$. Thus for any $\delta>0$, it suffices to choose
    \begin{equation}
        k=\frac{1}{\sqrt{2}\eta}\sqrt{W\left(\frac{2}{\pi\delta^2}\right)},
    \end{equation}
    to guarantee that $M_1\le \delta$. Next consider the region where $a+\kappa/2<|x|\leq 1$ and
    \begin{equation}
        \begin{split}
                |1-g^+(k,a;x)|&=\left|\frac{1}{2}\left\{\mathrm{erf}[k(\arcsin x+\arcsin a)]-\mathrm{erf}[k(\arcsin x-\arcsin a)]\right\}\right|\\
                &=\frac{1}{\sqrt{\pi}}\int_{k(\arcsin x-\arcsin a)}^{k(\arcsin x+\arcsin a)}dt~e^{-t^2}\\
                &\leq\frac{1}{\sqrt{\pi}}\int_{k(\arcsin x-\arcsin a)}^{k(\arcsin x+\arcsin a)}dt\frac{t}{k(\arcsin x-\arcsin a)}e^{-t^2}\\
                &=\frac{1}{2\sqrt{\pi}}\frac{1}{k(\arcsin x - \arcsin a)}\left[e^{-k^2(\arcsin x - \arcsin a)^2}-e^{-k^2(\arcsin x + \arcsin a)^2}\right]\\
                &\leq \frac{1}{2\sqrt{\pi}}\frac{1}{k(\arcsin x - \arcsin a)}e^{-k^2(\arcsin x - \arcsin a)^2},
        \end{split}
    \end{equation}
    which is maximized when $x=a+\kappa/2$. Therefore
    \begin{equation}
        M_2\leq \frac{1}{2\sqrt{\pi}}\frac{1}{k\nu}e^{-k^2\nu^2},
    \end{equation}
    where $\nu\equiv \arcsin(a+\kappa/2) - \arcsin(a)$. Thus for any $\delta>0$, choosing 
    \begin{equation}
        k=\frac{1}{\sqrt{2}\nu}\sqrt{W\left(\frac{1}{2\pi\delta^2}\right)},
    \end{equation}
    guarantees that $M_2\leq \delta$.  Putting these results together, choosing 
     \begin{equation}
        k=\mathrm{max}\left[\frac{1}{\sqrt{2}\eta}\sqrt{W\left(\frac{2}{\pi\delta^2}\right)},\frac{1}{\sqrt{2}\nu}\sqrt{W\left(\frac{1}{2\pi\delta^2}\right)}\right]
    \end{equation}
    guarantees that 
    \begin{equation}
        \|\Theta^+(a;x)-g^+(k,a;x)\|_{D_\kappa(a)} \leq \delta.
    \end{equation}

\section{Proof of Theorem \ref{thm:gtilde_even}}
\label{app:gtilde_even}
\begin{proof}
Letting $x=\cos\xi$ and $a=\cos\alpha$, we begin by writing
\begin{equation}    
    \begin{split}
        \tilde{g}^+(k,a;x)&=\frac{2\alpha}{\pi}+\frac{4}{\pi}\sum_{n=1}^{\infty}\frac{\sin(2n\alpha)}{2n}e^{-n^2/k^2}\cos(2n\xi)\\
        &= \frac{1}{\pi}\int_0^{\alpha+\xi}dz~\theta_3(z,e^{-1/k^2})+\frac{1}{\pi}\int_0^{\alpha-\xi}dz~\theta_3(z,e^{-1/k^2}),
    \end{split}
\end{equation}
where \begin{equation}
    \theta_3(z,q)\equiv 1+2\sum_{n=1}^{\infty}q^{n^2}\cos(2nz),
\end{equation}
is the third elliptic theta function. As before, via modular transformation we can write
\begin{equation}
    \theta_3(z,e^{-1/k^2})=k\sqrt{\pi}e^{-k^2z^2}\left[1+2\sum_{n=1}^\infty e^{-n^2\pi^2k^2}\cosh(2n\pi k^2z)\right],
\end{equation}
which can be integrated term-by-term to yield
\begin{equation}
\begin{split}
   \tilde{g}^+(k,a;x) &= \frac{1}{2}\left\{\mathrm{erf}[k(\alpha+\xi)]+\mathrm{erf}[k(\alpha-\xi)]\right\}+\frac{1}{2}\sum_{n=1}^{\infty}\left\{\mathrm{erf}[k(n\pi+\alpha-\xi)]-\mathrm{erf}[k(n\pi-\alpha+\xi)]\right\}\\
   &+\frac{1}{2}\sum_{n=1}^{\infty}\left\{\mathrm{erf}[k(n\pi+\alpha+\xi)]-\mathrm{erf}[k(n\pi-\alpha-\xi)]\right\}\\
   &=g^+(k,a;x)+R_1+R_2+R_3+R_4,
   \end{split}
\end{equation}
where 
\begin{equation}
    \begin{split}
        R_1&\equiv -\frac{1}{2}\left\{\mathrm{erfc}[k(\alpha+\xi)]+\mathrm{erfc}[k(\pi+\alpha+\xi)]\right\},\\
        R_2&\equiv \frac{1}{2}\left\{\mathrm{erf}[k(\pi+\alpha-\xi)-\mathrm{erf}[k(\pi-\alpha+\xi)]\right\},\\
        R_3&\equiv \frac{1}{2}\sum_{n=2}^{\infty}\left\{\mathrm{erf}[k(n\pi+\alpha-\xi)-\mathrm{erf}[k(n\pi-\alpha+\xi)]\right\},\\
        R_4&\equiv\frac{1}{2}\sum_{n=2}^{\infty}\left\{\mathrm{erf}[k(n\pi+\alpha+\xi)-\mathrm{erf}[k(n\pi-\alpha-\xi)]\right\},
    \end{split}
\end{equation}

\begin{equation}
    \begin{split}
        |R_1|&\leq \mathrm{erfc}[k(\alpha+\xi)]\\
        &\leq \frac{1}{\sqrt{\pi} k \alpha}e^{-k^2\alpha^2},
    \end{split}
\end{equation}
with solution
\begin{equation}
    k_1=\frac{1}{\sqrt{2}\alpha}\sqrt{W\left(\frac{2}{\pi\epsilon^2}\right)}
\end{equation}

\begin{equation}
\begin{split}
    |R_2|&\leq \left\{\begin{array}{cc}
      \frac{1}{\sqrt{\pi}}\int_{k(\pi+\alpha-\xi)}^{k(\pi-\alpha+\xi)}dt \frac{t}{k(\pi+\alpha-\xi)}e^{-t^2},   & \alpha<\xi,  \\
        \frac{1}{\sqrt{\pi}}\int_{k(\pi-\alpha+\xi)}^{k(\pi+\alpha-\xi)}dt \frac{t}{k(\pi-\alpha+\xi)}e^{-t^2},   & \alpha>\xi, 
    \end{array}\right.\\
    &\leq\left\{\begin{array}{cc}
      \frac{1}{2\sqrt{\pi}}\frac{1}{k(\pi+\alpha-\xi)}e^{-k^2(\pi+\alpha-\xi)^2},   & \alpha<\xi,  \\
        \frac{1}{2\sqrt{\pi}} \frac{1}{k(\pi-\alpha+\xi)}e^{-k^2(\pi-\alpha+\xi)^2},   & \alpha>\xi, 
    \end{array}\right.\\
    &\leq\frac{1}{2\sqrt{\pi}k\alpha}e^{-k^2\alpha^2},
\end{split}
\end{equation}
which is solved by 
\begin{equation}
    k_2=\frac{1}{\sqrt{2}\alpha}\sqrt{W\left(\frac{1}{2\pi \epsilon^2}\right)}
\end{equation}

\begin{equation}
\begin{split}
        |R_3|&\leq \frac{1}{2\sqrt{\pi}k}\sum_{n=2}^{\infty}\frac{1}{n\pi+\alpha}e^{-k^2(n\pi+\alpha)^2}\\
        &\leq \frac{1}{2\sqrt{\pi}k}\int_1^{\infty}\frac{dt}{t\pi+\alpha}e^{-k^2(t\pi+\alpha)^2}\\
        &=\frac{1}{4k\pi^{3/2}}E_1\left[k^2(\pi+\alpha)^2\right]\\
        &\leq \frac{1}{4k\pi^{3/2}}\frac{1}{k(\pi+\alpha)}e^{-k^2(\pi+\alpha)^2}
\end{split}
\end{equation}

\begin{equation}
\begin{split}
        |R_4|&\leq \frac{1}{2\sqrt{\pi}k}\sum_{n=2}^{\infty}\frac{1}{n\pi-\alpha}e^{-k^2(n\pi-\alpha)^2}\\
        &\leq \frac{1}{2\sqrt{\pi}k}\int_1^{\infty}\frac{dt}{t\pi-\alpha}e^{-k^2(t\pi-\alpha)^2}\\
        &=\frac{1}{4k\pi^{3/2}}E_1\left[k^2(\pi-\alpha)^2\right]\\
        &\leq \frac{1}{4k\pi^{3/2}}\frac{1}{k(\pi-\alpha)}e^{-k^2(\pi-\alpha)^2}
\end{split}
\end{equation}

Suppose $k>1/\pi$, then 
\begin{equation}
\begin{split}
        |R_3|&\leq \frac{1}{4\sqrt{\pi}k\alpha}e^{-k^2\alpha^2},\\
        |R_4|&\leq \frac{1}{4\sqrt{\pi}k\alpha}e^{-k^2\alpha^2},
\end{split}
\end{equation}
so that 
\begin{equation}
\begin{split}
        \epsilon_\mathrm{tot}&\leq |R_1|+|R_2|+|R_3|+|R_4|\\
        &\leq \left(\frac{1}{2}+\sqrt{2}\right)\frac{1}{\sqrt{\pi}k\alpha}e^{-k^2\alpha^2},
\end{split}
\end{equation}
with solution
\begin{equation}
    k=\frac{1}{\sqrt{2}\alpha}\sqrt{W\left(\frac{9+4\sqrt{2}}{2\pi\epsilon^2}\right)}
\end{equation}
\end{proof} 

\section{Proof of Theorem \ref{thm:gpoly_even}}
\label{app:gpoly_even}
\begin{proof}
We begin by noting that for all $0\leq a<1$ and $x\in[-1,1]$,
\begin{equation}
    |\sqrt{1-a^2}U_{2n}(a) T_{2n}(x)|\leq 1.
\end{equation}
It follows that
\begin{equation}
\begin{split}
     \|\tilde{g}^+(k,a;x)-g^+_N(k,a;x)\|&\leq \frac{4}{\pi}\sum_{n=N/2+1}^{\infty}\frac{1}{2n}e^{-n^2/k^2}\\
      &\leq \frac{4}{\pi}\int_{N/2}^{\infty}dt ~\frac{1}{2t}e^{-t^2/k^2}\\
      &=\frac{1}{\pi}\mathrm{E_1}\left[\frac{N^2}{4k^2}\right]\\
      &\leq \frac{4}{\pi}\frac{k^2}{N^2}e^{-N^2/4k^2},
\end{split}
\end{equation}
where $E_1(x)\equiv\int_x^\infty dt~e^{-t}/t$ is the exponential integral function.
Thus for any $\epsilon>0$, choosing 
\begin{equation}
    N=2\ceil*{k\sqrt{W(1/(\pi\epsilon))}},
\end{equation}
guarantees that 
\begin{equation}
    \|\tilde{g}^+(k,a;x)-g^+_N(k,a;x)\|\leq \epsilon.
\end{equation}
\end{proof}

\clearpage

\section{Proof of Theorem \ref{thm:g_odd_proj}}
\label{app:g_odd_proj}
\begin{proof}
Letting $x=\cos\xi$, $a=\cos\alpha$, the Chebyshev projection of $g^-(k,a;x)$ can be expressed as
\begin{equation}
    \frac{2}{\pi}\int_0^\pi d\xi~\frac{1}{2}\left\{\mathrm{erf}[k(\pi-\alpha-\xi)]+\mathrm{erf}[k(\alpha-\xi)]\right\}\cos[(2n+1)\xi]=\tilde{a}_{2n+1}+\Delta a_{2n+1},
\end{equation}
where $\tilde{a}_{2n+1}$ is the Chebyshev coefficient of $\tilde{g}(k,a;x)$, given in Eq. \eqref{eq:an_odd}, and
\begin{equation}
\begin{split}
 \Delta a_{2n+1}\equiv \frac{2}{\pi}\frac{e^{-(2n+1)^2/4k^2}}{2n+1} &\Bigg\{\mathrm{Im}\left[e^{-i(2n+1)\alpha}\mathrm{erfc}\left(k(\pi-\alpha)+i\frac{2n+1}{2k}\right)\right]\\
 &+\mathrm{Im}\left[e^{i(2n+1)\alpha}\mathrm{erfc}\left(k\alpha+i\frac{2n+1}{2k}\right)\right]\Bigg\}.
 \end{split}
\end{equation}

We use the fact that $\mathrm{Im}(z)\leq |z|$ to bound
\begin{equation}
\begin{split}
       \mathrm{Im}\left\{e^{-i(2n+1)\alpha}\mathrm{erfc}\left[k(\pi-\alpha)+i\frac{2n+1}{2k}\right]\right\}&\leq  \left|\mathrm{erfc}\left[k(\pi-\alpha)+i\frac{2n+1}{2k}\right]\right|\\
        &\leq e^{(2n+1)^2/4k^2} \mathrm{erfc}[k(\pi-\alpha)]\\
        &\leq e^{(2n+1)^2/4k^2}\frac{1}{\sqrt{\pi}k(\pi-\alpha)}e^{-k^2(\pi-\alpha)^2}
\end{split}
\end{equation}

By the same token, 
\begin{equation}
    \mathrm{Im}\left\{e^{-i(2n+1)\alpha}\mathrm{erfc}\left[k\alpha+i\frac{2n+1}{2k}\right]\right\}\leq e^{(2n+1)^2/4k^2}\frac{1}{\sqrt{\pi}k\alpha}e^{-k^2\alpha^2}
\end{equation}
Therefore 
\begin{equation}
\begin{split}
        |\Delta a_{2n+1}|&\leq \frac{2}{\pi^{3/2}k(\pi-\alpha)}\frac{e^{-k^2(\pi-\alpha)^2}}{2n+1}+\frac{2}{\pi^{3/2}\alpha}\frac{e^{-k^2\alpha^2}}{2n+1}\\
        &\leq \frac{4}{\pi^{3/2}k\alpha}\frac{e^{-k^2\alpha^2}}{2n+1}.
\end{split}
\end{equation}
We see that $|\Delta a_{2n+1}|$ is maximized when $n=0$, so that choosing 
\begin{equation}
    k=\frac{1}{\sqrt{2}\alpha}\sqrt{W\left(\frac{32}{\pi^3\tilde{\epsilon}^2}\right)}
\end{equation}
guarantees that $|\Delta a_{2n+1}|\leq \tilde{\epsilon}$ for all $n\geq 0$.
\end{proof}

\section{Proof of Theorem \ref{thm:g_even_proj}}
\label{app:g_even_proj}
\begin{proof}
Adopting $x=\cos\xi$, $a=\cos\alpha$, we find that the Chebyshev projection of $g^+(k,a;x)$ can be expressed as
\begin{equation}
    \frac{\delta_n}{\pi}\int_0^\pi d\xi~\frac{1}{2}\left\{\mathrm{erf}[k(\pi-\alpha-\xi)]+\mathrm{erf}[k(\alpha-\xi)]\right\}\cos[(2n+1)\xi]=\tilde{a}_{2n}+\Delta a_{2n}.
\end{equation}
When $n=0$, we have
\begin{equation}
    \Delta a_0=-\frac{\alpha}{\pi}\left\{\mathrm{erfc}(k\alpha)+\mathrm{erfc}[k(\pi-\alpha)]\right\}+\frac{1}{k\pi^{3/2}}\left[e^{-k^2(\pi-\alpha)^2}+e^{-k^2\alpha^2}\right]
\end{equation}
where 
\begin{equation}
\begin{split}
\mathrm{erfc}(k\alpha)+\mathrm{erfc}[k(\pi-\alpha)]&\leq 2\mathrm{erfc}(k\alpha)\\
&\leq \frac{2}{\sqrt{\pi}k\alpha} e^{-k^2\alpha^2},
\end{split}
\end{equation}
so that
\begin{equation}
    |\Delta a_0|\leq \frac{4}{\pi^{3/2}k}e^{-k^2\alpha^2}\leq \frac{2}{\sqrt{\pi}k\alpha}e^{-k^2\alpha^2}.
\end{equation}
For $n\geq 1$, we find
\begin{equation}
    \begin{split}
        \Delta a_{2n}&=\frac{2}{\pi}\frac{e^{-n^2/k^2}}{2n}\Bigg\{\mathrm{Im}\left[e^{-2in\alpha}\mathrm{erf}\left(k(\pi-\alpha)+i\frac{n}{k}\right)\right]+\mathrm{Im}\left[e^{2in\alpha}\mathrm{erf}\left(k\alpha+i\frac{n}{k}\right)\right]\Bigg\}
    \end{split}
\end{equation}

We use the fact that $\mathrm{Im}(z)\leq |z|$ to bound
\begin{equation}
\begin{split}
       \mathrm{Im}\left\{e^{-2in\alpha}\mathrm{erfc}\left[k(\pi-\alpha)+i\frac{n}{k}\right]\right\}&\leq  \left|\mathrm{erfc}\left[k(\pi-\alpha)+i\frac{n}{k}\right]\right|\\
        &\leq e^{n^2/k^2} \mathrm{erfc}[k(\pi-\alpha)]\\
        &\leq e^{n^2/k^2}\frac{1}{\sqrt{\pi}k(\pi-\alpha)}e^{-k^2(\pi-\alpha)^2}
\end{split}
\end{equation}

By the same token, 
\begin{equation}
    \mathrm{Im}\left\{e^{-2in\alpha}\mathrm{erfc}\left[k\alpha+i\frac{n}{k}\right]\right\}\leq e^{n^2/k^2}\frac{1}{\sqrt{\pi}k\alpha}e^{-k^2\alpha^2}
\end{equation}
So that 
\begin{equation}
    |\Delta a_{2n}| \leq \frac{4}{\pi^{3/2}k\alpha}\frac{e^{-k^2\alpha^2}}{2n}, ~~~n\geq 1.
\end{equation}
Combining this result with that obtained for $n=0$, it follows that for all $\tilde{\epsilon}>0$, choosing 
\begin{equation}
    k=\frac{1}{\sqrt{2}\alpha}\sqrt{W\left(\frac{8}{\pi\tilde{\epsilon}^2}\right)},
\end{equation}
guarantees that $|\Delta a_{2n}|\leq \tilde{\epsilon}$ for all $n\geq 0$.
\end{proof}

\clearpage
\end{widetext}

\end{document}